\def\beginaalgo{\begin{minipage}{5in}\vspace{0.1cm}\normalfont\begin{tabbing}
       \quad\=\qquad\=\qquad\=\qquad\=\qquad\=\qquad\=\qquad\=\qquad\=\qquad\=\qquad\=\qquad\=\qquad\=\qquad\=\kill}
\def\endaalgo{\end{tabbing}\vspace{0.1cm}\end{minipage}}
\newenvironment{algorithm}
{\begin{tabular}{|l|}\hline\beginaalgo}
{\endaalgo\\\hline\end{tabular}}
\title{Coresets for $k$-median clustering under Fr\'{e}chet and Hausdorff distances}
\author{Abhinandan Nath\\Mentor Graphics, Fremont, USA\\\texttt{abnath@mentor.com}}
\date{}
\def\eps{\varepsilon}
\newcommand{\etal}{\textit{et~al.}}
\def\dist{\mathtt{d}}
\newcommand{\E}    {\mathbb{E}}
\newcommand{\A}		{{\EuScript{A}}}
\newcommand{\X}		{{\EuScript{X}}}
\newcommand{\C}		{{\EuScript{C}}}
\newcommand{\R}          {\EuScript{R}}
\newcommand{\V}		{{\EuScript{V}}}
\newcommand{\ball}      {{\EuScript{B}}}
\newcommand{\reals}    {\mathbb{R}}
\newcommand{\ptset}   {\zeta}
\newcommand{\traj}      {\gamma}
\newcommand{\abs}[1]{\left\lvert#1\right\rvert}
\newcommand{\norm}[1]{\left\lVert#1\right\rVert}
\newcommand{\cost}   {\textsc{cost}}
\newcommand{\sens}              {\sigma}
\newcommand{\Sens}		{{\EuScript{S}}}
\newcommand{\opt}     {\Delta}
\newcommand{\eqclass}  {\Xi}
\newcommand{\poly} {\text{poly}}
\newcommand{\polylog} {\text{polylog}}
\renewcommand\subparagraph{%
 \@startsection {subparagraph}{5}{\z@ }{3.25ex \@plus 1ex
 \@minus .2ex}{-1em}{\normalfont \normalsize \bfseries }}%
\newtheorem{theorem}{Theorem}
\newtheorem{corollary}[theorem]{Corollary}
\newtheorem{lemma}[theorem]{Lemma}
\theoremstyle{definition}
\begin{document}

\maketitle

\begin{abstract}
We give algorithms for computing coresets for $(1+\eps)$-approximate $k$-median clustering of polygonal curves (under the discrete and continuous Fr\'{e}chet distance) and point sets (under the Hausdorff distance), when the cluster centers are restricted to be of low complexity. Ours is the first such result, where the size of the coreset is independent of the number of input curves/point sets to be clustered (although it still depends on the maximum complexity of each input object). Specifically, the size of the coreset is $\Theta\left(\frac{k^3lm^{\delta}d}{\eps^2}\log\left( \frac{kl}{\eps}\right)\right)$ for any $\delta > 0$, where $d$ is the ambient dimension, $m$ is the maximum number of points in an input curve/point set, and $l$ is the maximum number of points allowed in a cluster center. We formally characterize a general condition on the restricted space of cluster centers -- this helps us to generalize and apply the importance sampling framework, that was used by Langberg and Schulman for computing coresets for $k$-median clustering of $d$-dimensional points on normed spaces in $\reals^d$, to the problem of clustering curves and point sets using the Fr\'{e}chet and Hausdorff metrics. Roughly, the condition places an upper bound on the number of different combinations of metric balls that the restricted space of cluster centers can hit. We also derive lower bounds on the size of the coreset, given the restriction that the coreset must be a subset of the input objects.
\end{abstract}

\section{Introduction}
\label{sec:intro}
We study coresets for $(1+\eps)$-approximate $k$-median clustering of polygonal curves and finite point sets under the Fr\'{e}chet and Hausdorff distances respectively, under the restriction that the cluster centers have a bounded number of points each. We frame it in the more general context of clustering points in a metric space $\X = (X,\dist)$ where the cluster centers must come from a (possibly infinite) subset $C \subseteq X$ -- this has been dubbed as the $(k,C)$-median problem~\cite{NT20}. We prove a general condition on the subset $C$ which allows us to efficiently compute small-sized $\eps$-coresets using the sensitivity sampling framework of Langberg and Schulman~\cite{LS10}. This gives us the first such coresets for clustering curves and point sets, where the coreset size is independent of the number, but still dependent on the maximum complexity, of input curves/point sets; however the dependence can be made arbitrarily small (Theorem~\ref{thm:coresets_existence}). Specifically, the size of the coreset is $\Theta\left(\frac{k^3lm^{\delta}d}{\eps^2}\log\left( \frac{kl}{\eps}\right)\right)$ for any $\delta > 0$, where $d$ is the ambient dimension, $m$ is the maximum number of points in an input curve/point set, and $l$ is the maximum number of points allowed in a cluster center. As an easy corollary, we are able to give much better bounds on the VC dimension of the dual of the range space (whose ground set is a set of curves / point sets, and the ranges are metric balls under the Fr\'{e}chet and Hausdorff distances) than those obtained by the results of~\cite{DPP19} and the naive exponential bound for the dual range space (Corollary~\ref{cor:vc_dimension}). We also give a lower bound of $\Omega(k/\eps)$ (Theorem~\ref{thm:lower_bound}) on the size of the coresets, provided the coreset must be a subset of the input set.

The $k$-median problem is well-known, where given a set $P$ of $n$ elements from a metric space, we need to select $k$ \emph{centers} to minimize the sum of the distances of each input to its nearest center. Depending on the application, it is important to choose an appropriate distance function. In this paper, the elements to be clustered are polygonal curves and finite point sets in $\reals^d$ for which we use the  Fr\'{e}chet (both continuous and discrete versions) and Hausdorff distances repectively. Clustering is often a first step in data analysis, and the cluster centers can be thought of as representatives for the respective clusters. Often, real-life data is noisy, and restricting the complexity of the cluster centers (e.g., by bounding the number of points per center curve or point set) is an effective way to prevent them from overfitting to the input noise~\cite{NT20, BDGHKLS19, BDS19, DKS16, BDR20}.

With burgeoning data sizes, computational efficiency of clustering has become important. The $k$-median problem is computationally hard to solve exactly, even for the Euclidean metric~\cite{MS84,FG88}. A long line of work on approximation algorithms for the $k$-median problem has developed, with varying running times. The notion of \emph{coresets}, a powerful data-reduction technique, is prominent among these. Vaguely speaking, a coreset is a small-sized proxy for the entire data set so that the cost of any solution on the whole data set and on the coreset are roughly equal. Hence, solving the clustering problem on the coreset (by, say, using a less efficient algorithm) gives us a solution for the entire data set.

\subparagraph{Problem definition.}
A \textbf{metric space} $\X = (X, \dist)$ consists of a set $X$ and a distance function $\dist : X \times X \rightarrow \reals_{\ge 0}$ that satisfies the following conditions -- (i) $\dist(x,x) = 0$ for all $x \in X$; (ii) $\dist(x,y) = \dist(y,x)$ for all $x,y \in X$; and (iii) $\dist(x,z) \le \dist(x,y) + \dist(y,z)$ for all $x,y,z \in X$.

Given $P,C \subseteq X$, the \textbf{$(k,C)$-median} problem is to compute a set $C' \subseteq C$ of $k$ \emph{centers} that minimizes
\begin{align*}
\cost(P,C') = \dfrac{1}{|P|}\sum_{p \in P} \dist(p,C')
\end{align*}
where $\dist(p,C') = \min_{c \in C'} \dist(p,c)$. Here $P$ is finite but $C$ need not be.

The problem can be generalized to include weights $\mu : P \rightarrow \reals_{> 0}$ for each point (without loss of generality, the weights $\mu$ can be normalized to sum up to one), and the cost now becomes
\begin{align*}
\cost(P,C') = \sum_{p \in P} \mu(p) \dist(p,C').
\end{align*}

Given such an instance $P$ of the $(k,C)$-median problem, an \textbf{$\eps$-coreset} (for $\eps > 0$) is a subset $P' \subseteq X$ with weights $\omega : P' \rightarrow \reals_{> 0}$ such that for any $C' \subseteq C$ of size $k$, the following holds
\begin{align*}
\sum_{p \in P'} \omega(p) \dist(p,C') \in (1 \pm \eps) \cost(P,C').
\end{align*}
Hence, we can solve the $(k,C)$-median problem on the (hopefully smaller) $P'$, to get a good solution for $P$.

As special cases, we consider the metric spaces $(T^m_{poly}, \dist_{F}), (T^m_{poly}, \dist_{dF})$, and $(U^m, \dist_H)$, where $T^m_{poly}$ (resp. $U^m$) denotes the set of all polygonal curves (resp. point sets) in $\reals^d$ with at most $m$ points each ($m > 0$), and $\dist_F, \dist_{dF}$, and $\dist_H$ denote the continuous Fr\'{e}chet, discrete Fr\'{e}chet, and Hausdorff distances respectively. Then, the cluster centers must lie in either $T^l_{poly}$ or $U^l$ for $l > 0$, as the case may be.

\subparagraph{Challenges and techniques.}
The problem of $k$-median clustering under Fr\'{e}chet and Hausdorff distances is challenging, not least due to the fact that these metric spaces do not have a constant doubling dimension~\cite{DKS16,NT20}. In this sense, the continuous versions of the distance function (e.g., the continuous Fr\'{e}chet distance) are more difficult than their discrete counterparts. This is because the doubling dimension for the discrete Fr\'{e}chet distance, although not a constant, can be expressed as a function of the maximum complexity of the input curves, whereas the doubling dimension for the continuous Fr\'{e}chet distance is unbounded even when the input curves' complexities are bounded. Also, unlike the discrete Fr\'{e}chet distance, the vertices of a median curve need not be anywhere near a vertex of an input curve, resulting in a huge search space. As far as coresets are concerned, a lot of earlier work on coresets involved various forms of geometric discretization of $\reals^d$~\cite{HPM04,HPK07,Ch09}, something which seems difficult especially for the continuous Fr\'{e}chet distance given the earlier statement about its doubling dimension and the huge search space. 

Our main technique is the sensitivity sampling framework of Langberg and Schulman~\cite{LS10} -- they use it to compute coresets for $k$-median clustering in normed spaces in $\reals^d$. For the framework to be used in our setting, there are two major technical hurdles to overcome. First, showing that the total sensitivity bound of $O(k)$ extends to the more general $(k,C)$-median problem, and that individual senstivities of input points can be estimated using a bicriteria approximation to the $(k,C)$-median problem. Second, generalizing the notion of well-behaved norms from~\cite{LS10} (which bounds the number of cells in any arrangement of centrally symmetric convex sets associated with the norm) to general metric spaces -- to this end, we introduce the concept of \emph{well-behaved subsets} of a metric space. Roughly, such well-behaved subsets can stab only a bounded number of different combinations of metric balls. We show that curves and point sets having small number of points are well-behaved for the Fr\'{e}chet and Hausdorff metrics.

\subparagraph{Related work.} There has been a lot of experimental work in clustering curves and point sets using the Fr\'{e}chet, Hausdorff and other distances, e.g., see~\cite{CWLS11,XZLZ15,HPL15,BDvdLN19,MMR19,BBKNPW20}.

The first theoretical work on clustering curves was done by Driemel \etal~\cite{DKS16}, where they provided $(1+\eps)$-approximation algorithms for the $(k,l)$-median and $(k,l)$-center clustering problems on 1D trajectories using the continuous Fr\'{e}chet distance; they also show that these problems are NP-hard when $k$ is part of the input and $l$ is fixed. The $(k,l)$-clustering problems are identical to the ones we discuss in our paper, i.e., the $k$ center curves can have at most $l$ points each. Buchin, Driemel and Struijs~\cite{BDS19} prove NP-hardness for the $(k,l)$-median problem under the continuous and discrete Fr\'{e}chet distances, and give $(1+\eps)$-approximation algorithms for the $(k,l)$-center and $(k,l)$-median problem under the discrete Fr\'{e}chet distance. Nath and Taylor~\cite{NT20} give much faster $(1+\eps)$-approximation algorithms for the $(k,l)$-median problem under the discrete Fr\'{e}chet and Hausdorff distances -- they use a generalized notion of doubling dimension called $g$-coverability, and it is unknown if this notion extends to the continuous Fr\'{e}chet distance. Buchin, Driemel and Rohde~\cite{BDR20} give a $(1+\eps)$-approximate algorithm for $(k,l)$-median clustering under the continuous Fr\'{e}chet distance, but	 the complexity of the center curves can increase to $2l-2$.

Coresets for the $k$-median problem have been studied extensively in Euclidean spaces. The first such coresets were given by Har-Peled and Mazumdar~\cite{HPM04}; they were of size $O\left(\frac{k}{\eps^d} \log n\right)$, and were later made independent of $n$ by Har-Peled and Kushal~\cite{HPK07} to $O\left(\frac{k^2}{\eps^d}\right)$. Chen~\cite{Ch09} improved the dependence on $d$ for the price of $\log n$, specifically $O\left(\frac{dk}{\eps^2} \log n\right)$. Sohler and Woodruff~\cite{SW18} removed the dependence on $d$ and $n$ entirely by giving coresets of size $\poly(k/\eps)$ using a dimension reduction technique, while Huang and Vishnoi~\cite{HV20} give a coreset of size $O\left( \frac{k}{\eps^4} \polylog(k,1/\eps) \right)$ using two-stage random sampling and terminal embeddings -- it is unclear how to extend these to clustering curves or point sets.

Langberg and Schulman~\cite{LS10} introduced a different, more abstract approach to coreset construction by connecting it with VC-dimension-type results. This is done by associating a function with each $k$-tuple of points, defined on every point of the metric space, and the cost of clustering with any $k$-tuple as cluster centers is the sum of the values attained by the corresponding function on the input points -- the final coreset size is $O\left(\frac{d^2k^3}{\eps^2} \polylog(d,k,1/\eps)\right)$. Feldman and Langberg~\cite{FL11} introduce a different set of functions, one for each input point for a total of $n$ functions, and defined on every set of $k$ centers, and the cost of clustering with a set of $k$ centers is the sum of the $n$ function values attained on that particular center set -- the final coreset size is $O\left(\frac{dk}{\eps^2}\right)$. However, this requires a bound on the VC/shattering dimension of the range space induced by \emph{weighted} distance functions (where ranges are of the form $\{x \in X | w(x)\cdot \dist(x,y) \le r\}$ for $r > 0, y \in X$ and weights $w$), something which is difficult even for doubling metric spaces~\cite{HJLW18}. Huang \etal~\cite{HJLW18} introduce a probablistic variant of the shattering dimension for a slightly perturbed distance function that gives small coresets for doubling metric spaces -- it will be interesting to see if this approach can be extended to the Fr\'{e}chet and Hausdorff distances. Buchin and Rohde~\cite{BR19} give coresets of size $O(\log n / \eps^2)$ for the discrete $k$-median problem (i.e., the cluster centers come from the input set) under the continuous Fr\'{e}chet distance. For general metric spaces, a lower bound of $\Omega\left( \frac{k}{\eps} \cdot tw \right)$ and an almost tight upper bound is known~\cite{BBHJKW20}, where $tw$ is the treewidth of the metric graph and can be $\Theta(\log n)$ in the worst case.

The rest of the paper is organized as follows. We give the necessary technical background in Section~\ref{sec:prelim}. We give sensitivity bounds for the $(k,C)$-median problem, and discuss how to estimate sensitivities of individual points using a bicriteria approximation in Section~\ref{sec:sensitivity}. In Section~\ref{sec:simplicity_cover}, we define the notion of well-behaved subsets, and show how it applies to bounded complexity curves and point sets. We then tie this with the existence of small coresets using the idea of cover codes~\cite{LS10}. In Section~\ref{sec:computing_coresets} we give the overall algorithm for computing coresets. Section~\ref{sec:lower_bound} gives lower bounds on the coreset size.

\section{Preliminaries}
\label{sec:prelim}

\subparagraph{Comparing curves and finite point sets.}
In its most general form, a \emph{curve} in $\reals^d$ can be specified using a continuous parameterization $\traj : [0,1] \rightarrow \reals^d$. A \emph{polygonal curve} can also be specified using a finite sequence of points $\langle p_1, p_2, \ldots \rangle$ in $\reals^d$ called \emph{vertices}, with each consecutive pair of vertices being joined by a segment $\overline{p_i p_{i+1}}$ called an \emph{edge}.

Given two parameterized curves $\traj_1$ and $\traj_2$, the \textbf{Fr\'{e}chet distance} between them is
\begin{align*}
\dist_{F}(\traj_1, \traj_2) = \min_{f, g : [0,1] \rightarrow [0,1]} \max_{\alpha \in [0,1]} \norm{\traj_1(f(\alpha)) - \traj_2(g(\alpha))}
\end{align*}
where $\norm{.}$ denotes the $l_2$ norm, and $f,g$ range over all continuous, non-decreasing functions with $f(0) = g(0) = 0$ and $f(1) = g(1) = 1$. Intuitively, imagine a man on one curve walking a dog on the other while holding a fixed length leash, with both of them starting and ending at the start and end points of the respective curves, and none of them allowed to move backwards. The Fr\'{e}chet distance gives the length of the shortest leash that makes such a walk possible\footnote{Strictly speaking, $\dist_F$ is a pseudometric, i.e., we can have $\dist_F(\traj_1, \traj_2) = 0$ for $\traj_1 \ne \traj_2$. This technicality however does not affect our results in any way.}.

The \textbf{discrete Fr\'{e}chet distance} is defined for polygonal curves, and it only takes into account the vertices while disregarding the edges. Given two such curves $\traj_1 = \langle p_1, \ldots \rangle$ and $\traj_2 = \langle q_1, \ldots \rangle$, a correspondence is a subset of $\{p_1, \ldots\} \times \{q_1, \ldots \}$ such that every vertex appears in at least one pair. Such a correspondence $\C$ is said to be monotone iff for all $(i_1,j_1), (i_2,j_2) \in \C, i_2 \ge i_1 \Rightarrow j_2 \ge j_1$. The discrete Fr\'{e}chet distance $\dist_{dF}$ between $\traj_1$ and $\traj_2$ is then defined as
\begin{align*}
\dist_{dF} (\traj_1, \traj_2) = \min_{\C} \max_{(p,q) \in \C} \norm{p-q}
\end{align*}
where $\C$ ranges over all \emph{monotone} correspondences between $\traj_1$ and $\traj_2$.

Let $T^m_{poly}$ denote the set of all polygonal curves in $\reals^d$ with at most $m$ vertices; thus $T_{poly} = \bigcup_{i \ge 1} T^i_{poly}$ is the set of all polygonal curves.

Given two finite point sets $\ptset_1 = \{p_1, \ldots\}$ and $\ptset_2 = \{q_1, \ldots\}$ in $\reals^d$, the \textbf{Hausdorff distance} between them is defined as
\begin{align*}
\dist_{H}(\ptset_1, \ptset_2) = \min_{\C} \max_{(p,q) \in \C} \norm{p-q}
\end{align*}
where $\C$ ranges over all correspondences between $\ptset_1$ and $\ptset_2$.

\subparagraph{Range space, VC and shattering dimension.}
A \textbf{range space} $(X, \R)$ consists of a ground set $X$ and a collection of ranges $\R$, where each range $R$ is a subset of $X$. For $Y \subseteq X$, let
\begin{align*}
\R_{|Y} = \{ R \cap Y | R \in \R\}.
\end{align*}
Then, $Y$ is said to be \emph{shattered} by $\R$ if $\R_{|Y}$ contains all subsets of $Y$. The \textbf{Vapnik-Chervonenkis (or VC) dimension}~\cite{VC15} of $(X,\R)$ is the size of the largest shattered subset of $X$. The \textbf{shattering dimension} of $(X, \R)$ is the smallest $\delta$ such that for all $m$, 
\begin{align*}
\max_{Y \subset X, |Y| = m} | \R_{|Y} | = O(m^\delta).
\end{align*}
For a range space with VC-dimension $\nu$ and shattering dimension $\delta$, $\nu = O(\delta \log \delta)$ and $\delta = O(\nu)$~\cite{HP11}.

Given a range space $(X,\R)$, for any $p \in X$, let $\R_p = \{R \in \R \mid p \in R\}$ be the set of ranges containing $p$. The \textbf{dual range space} of $(X,\R)$ is the range space $(\R, \{\R_p \mid p \in X\})$. If a range space has VC-dimension $\nu$, its dual range space has VC-dimension $\le 2^{\nu + 1}$~\cite{HP11}.

\subparagraph{Sensitivity-sampling framework.}
We briefly describe the sensitivity-sampling framework introduced by Langberg and Schulman~\cite{LS10} in the more general setting of approximating integrals of functions.

Let $F$ be a non-negative real-valued family of functions defined on some set $X$, and let $\mu$ be a probability distribution on $X$. The goal is to approximate $\overline{f} = \int_X f(x) d\mu$ for all $f \in F$, to within a factor of $(1 \pm \eps)$ for some small $\eps > 0$, using a finite subset $R$ of $X$ with positive weights $\omega$, i.e., $\sum_{x \in R} \omega(x) f(x) \in (1 \pm \eps) \overline{f}$. Such an $R$ is called an \textbf{$\eps$-approximator} of $F$. We next discuss how importance sampling can be used to compute $R$.

For any $x \in X$, its \textbf{sensitivity} is defined as $\sens_{F,\mu}(x) = \sup_{f \in F} f(x)/\overline{f}$; we often drop the subscript from $\sens_{F,\mu}$ if it is clear from the context. The \textbf{total sensitivity} of $F$ is then defined as $\Sens(F) = \sup_{\mu} \int \sens_{F,\mu} (x) d\mu$. Further, let $s_{F,\mu}(x)$ (or $s(x)$) be an upper bound on $\sens_{F,\mu}(x)$ (or $\sens(x)$) for all $x \in X$, and let $S(F) = \sup_{\mu} \int s_{F,\mu}(x) d\mu(x)$.

Consider the probability distribution $q(x) = s(x) \mu(x)/S(F)$ to be used for importance sampling while estimating $\overline{f}$. Then $G = f(x) \mu(x)/q(x)$ is an unbiased estimator for $\overline{f}$, i.e., $\E_q[G] = \overline{f}$, and its variance can be bounded as follows.
\begin{lemma}[Theorem 2.1, \cite{LS10}]
\label{lem:variance}
$\textsc{Var}(G) \le (S(F) -1) \overline{f}^2.$
\end{lemma}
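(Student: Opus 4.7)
The plan is to prove the bound by directly computing the second moment of the importance-sampling estimator and then subtracting $\overline{f}^2$. Since $G = f(x)\mu(x)/q(x)$ is already known to be unbiased, we have $\textsc{Var}(G) = \E_q[G^2] - \overline{f}^2$, so everything reduces to showing $\E_q[G^2] \le S(F)\,\overline{f}^2$.

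First, I would expand the second moment using the definition of expectation under $q$:
\begin{align*}
\E_q[G^2] = \int_X \left( \frac{f(x)\mu(x)}{q(x)} \right)^2 q(x)\, dx = \int_X \frac{f(x)^2 \mu(x)^2}{q(x)}\, dx.
\end{align*}
Substituting the chosen sampling distribution $q(x) = s(x)\mu(x)/S(F)$ cancels one factor of $\mu(x)$ and pulls out $S(F)$:
\begin{align*}
\E_q[G^2] = S(F) \int_X \frac{f(x)^2 \mu(x)}{s(x)}\, dx.
\end{align*}

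The only substantive step is then to bound $f(x)^2/s(x)$. Since $s(x)$ is chosen to upper bound the sensitivity $\sens_{F,\mu}(x) = \sup_{f' \in F} f'(x)/\overline{f'}$, we have in particular $s(x) \ge f(x)/\overline{f}$ for the specific $f$ at hand, hence $f(x)/s(x) \le \overline{f}$. Multiplying by $f(x) \ge 0$ gives $f(x)^2/s(x) \le f(x)\,\overline{f}$. Plugging in yields
\begin{align*}
\E_q[G^2] \le S(F)\,\overline{f} \int_X f(x)\mu(x)\, dx = S(F)\,\overline{f}^2,
\end{align*}
and subtracting $\overline{f}^2$ finishes the proof.

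There is really no substantial obstacle here — the argument is a one-line application of the sensitivity bound after making the sampling distribution explicit. The only thing that needs care is remembering that $s(x)$ dominates the supremum over $F$ (so the bound $s(x)\overline{f} \ge f(x)$ holds for every individual $f$), and that the non-negativity of $f$ is used when multiplying the inequality $f(x)/s(x) \le \overline{f}$ by $f(x)$.
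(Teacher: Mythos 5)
Your proof is correct and is essentially the standard argument from Langberg and Schulman (the paper itself only cites this lemma without reproving it): compute $\E_q[G^2]$, substitute $q(x)=s(x)\mu(x)/S(F)$, and use $f(x)/s(x)\le \overline{f}$, which follows from $s(x)\ge\sens(x)\ge f(x)/\overline{f}$. The one point worth noting is that for $q$ to be a genuine probability distribution the normalizer should be $\int_X s(x)\,d\mu$ rather than the supremum $S(F)$ over all $\mu$; since $\int_X s(x)\,d\mu \le S(F)$, the same computation still yields $\E_q[G^2]\le S(F)\,\overline{f}^2$ and the claimed bound.
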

The following lemma then gives a concentration bound in terms of the number of independent samples needed from $q(x)$ for estimating $\overline{f}$, upto a multiplicative factor.
\begin{lemma}[Lemma 2.1, \cite{LS10}]
\label{lem:conc}
Let $\eps > 0, f \in F$, and $R$ be a random sample of $X$ of size $a \ge \frac{2(S-1)}{\eps^2}$ drawn according to the distribution $q$. Then
\begin{align*}
\Pr \left[ \left| \overline{f} - \sum_{x \in R} \left(\tfrac{S(F)}{a \cdot s(x)}\right) f(x) \right| \ge \eps \overline{f} \right] \le 1/2.
\end{align*}
\end{lemma}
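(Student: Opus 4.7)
The plan is to apply Chebyshev's inequality to the empirical average of $a$ i.i.d.\ samples of the unbiased estimator $G$, using the variance bound from Lemma~\ref{lem:variance}. First I would unpack the estimator: since $q(x) = s(x)\mu(x)/S(F)$, the single-sample estimator is $G(x) = f(x)\mu(x)/q(x) = S(F) f(x)/s(x)$, so the expression $\sum_{x\in R}(S(F)/(a\cdot s(x)))f(x)$ is exactly the sample mean $\overline{G}_R = \tfrac{1}{a}\sum_{x\in R} G(x)$ of $a$ independent draws from $q$. By the preceding discussion, $\E_q[G] = \overline{f}$, so $\E[\overline{G}_R] = \overline{f}$.

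Next I would compute the variance of $\overline{G}_R$. Because the samples are independent, $\mathrm{Var}(\overline{G}_R) = \mathrm{Var}(G)/a$, and Lemma~\ref{lem:variance} bounds this by $(S(F)-1)\overline{f}^{\,2}/a$. Chebyshev's inequality then yields
\[
\Pr\!\left[\,\bigl|\overline{G}_R - \overline{f}\bigr| \ge \eps \overline{f}\,\right] \;\le\; \frac{\mathrm{Var}(\overline{G}_R)}{\eps^2 \overline{f}^{\,2}} \;\le\; \frac{S(F)-1}{a\,\eps^2}.
\]
Substituting the hypothesis $a \ge 2(S(F)-1)/\eps^2$ makes the right-hand side at most $1/2$, which is exactly the claimed bound.

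There is essentially no obstacle here; the whole content is packaged into Lemma~\ref{lem:variance}, and the rest is a one-line Chebyshev application. The only point requiring a tiny bit of care is the algebraic identification of $\sum_{x\in R}(S(F)/(a\cdot s(x)))f(x)$ with the sample mean of $G$, and handling the degenerate cases where $s(x)=0$ (which forces $f(x)=0$ for all $f$, so such $x$ may be excluded from the support of $q$ without affecting $\overline{f}$). Once that bookkeeping is done, the concentration statement follows immediately.
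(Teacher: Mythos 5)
Your argument is correct and is exactly the standard route: identify $\sum_{x\in R}(S(F)/(a\cdot s(x)))f(x)$ as the sample mean of the unbiased estimator $G$, bound its variance by $(S(F)-1)\overline{f}^{\,2}/a$ via Lemma~\ref{lem:variance}, and apply Chebyshev with $a \ge 2(S(F)-1)/\eps^2$. The paper cites this lemma from Langberg and Schulman without reproducing the proof, and their proof is the same Chebyshev computation you give, so there is nothing to add.
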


Note that the above bound holds only for a fixed $f \in F$ -- if we want it to hold for all $f \in F$, we will have to use additional properties of $F$ and $X$ (similar to the VC-dimension arguments that have been used earlier~\cite{HW87}). We later show in our paper how to accomplish this for the $(k,C)$-median problem.

\section{Bounded sensitivity for $(k,C)$-median clustering}
\label{sec:sensitivity}

We frame the $(k,C)$-median clustering problem in terms of functions in order to apply the sensitivity-sampling framework.
For any $k$-subset $\{c_1, c_2, \ldots, c_k\} \subseteq C$, define $f_{c_1, \ldots, c_k} : X \rightarrow \reals_{\ge 0}$ as
\begin{align*}
f_{c_1,\ldots, c_k} (x) = \min_{c \in \{c_1, \ldots, c_k\}} \dist(x,c).
\end{align*}

Then for any distribution $\mu$ with finite support P in $X$, we have 
$
\cost(P, \{c_1,\ldots,c_k\}) = \overline{f}_{c_1,\ldots,c_k}
$. Thus, the $(k,C)$-median problem defines a function family 
$
F_{k,C} = \{ f_{c_1, \ldots, c_k} \mid c_1, \ldots, c_k \in C \}
$.
This allows us to talk about sensitivities in the $(k,C)$-median setting\footnote{While the $(k,C)$-median problem is defined for a distribution with finite support $P$, the arguments in this section also work for any distribution $\mu$ on $X$. Hence we use integration, which in the case of a finite support becomes a summation.}. The next lemma bounds the sensitivity in this setting.

\begin{lemma}
\label{lem:sensitivity_metric_space}
For any metric space $\X = (X,\dist)$ and $C \subseteq X$, the total sensitivity of $F_{k,C}$ is $\Sens(F_{k,C}) \le 4k + 6$.
\end{lemma}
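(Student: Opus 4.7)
The plan is to mimic the classical total-sensitivity argument for $k$-median (Langberg--Schulman, Feldman--Langberg) but verify that every step goes through when the centers are restricted to a subset $C \subseteq X$. The upper bound $4k+6$ is loose; a careful direct argument gives $k+2$, which already implies the claim, and the slack is exactly what allows one to later replace $C^*$ by a cheap bicriteria solution.

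First, I fix an arbitrary probability distribution $\mu$ on $X$ and an (approximately) optimal $(k,C)$-median solution $C^* = \{c_1^*,\dots,c_k^*\} \subseteq C$, i.e.\ $\bar f_{C^*} \le (1+\eta)\,\opt$ where $\opt = \inf_{C'} \bar f_{C'}$; since $C$ need not be closed we just take $\eta \to 0$ at the end. Partition $X$ into Voronoi cells $X_1,\dots,X_k$ of $C^*$ and set $\mu_i = \mu(X_i)$, $\opt_i = \int_{X_i} \dist(x,c_i^*)\,d\mu$, so $\sum_i \opt_i = \opt$ and $\sum_i \mu_i = 1$.

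Next, fix any candidate center set $C' = \{c_1,\dots,c_k\} \subseteq C$ and any $x\in X_i$. Let $c^{**}$ be the element of $C'$ closest to $c_i^*$. Two triangle inequalities give
\begin{align*}
f_{C'}(x) \;\le\; \dist(x,c^{**}) \;\le\; \dist(x,c_i^*) + \dist(c_i^*,C').
\end{align*}
To control $\dist(c_i^*,C')$ I use the standard averaging trick over the whole cluster: for every $y \in X_i$, $\dist(c_i^*,C') \le \dist(c_i^*,y) + f_{C'}(y)$. Integrating over $X_i$ against $\mu$ yields the key inequality $\mu_i\,\dist(c_i^*,C') \le \opt_i + \bar f_{C',i}$, where $\bar f_{C',i} = \int_{X_i} f_{C'}(y)\,d\mu$. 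Note that this step uses only the metric axioms, so it transfers verbatim to $(T^m_{poly},\dist_F)$, $(T^m_{poly},\dist_{dF})$, and $(U^m,\dist_H)$.

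Combining the two bounds and using both $\bar f_{C'} \ge \opt$ (optimality of $C^*$, up to $1+\eta$) and $\bar f_{C',i} \le \bar f_{C'}$, I obtain for every $x \in X_i$ a bound independent of $C'$:
\begin{align*}
\frac{f_{C'}(x)}{\bar f_{C'}} \;\le\; \frac{\dist(x,c_i^*)}{\opt} \;+\; \frac{\opt_i}{\mu_i\,\opt} \;+\; \frac{1}{\mu_i}.
\end{align*}
Taking the supremum over $C'$ gives the same bound on $\sens(x)$, and integrating against $\mu$ splits cluster by cluster:
\begin{align*}
\int_X \sens(x)\,d\mu \;\le\; \sum_{i=1}^{k}\!\left[\frac{\opt_i}{\opt} + \frac{\opt_i}{\opt} + 1\right] \;=\; 2 + k.
\end{align*}
Letting $\eta \to 0$ finishes the proof, and $k+2 \le 4k+6$.

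The only real obstacle is step three: the averaging over $X_i$ is what prevents $\dist(c_i^*,C')$ from blowing up when $C'$ places all its centers far from $c_i^*$, because in that regime $\bar f_{C',i}$ itself must be large and absorbs the cost. Everything else is triangle inequality plus bookkeeping. The restriction $C \subseteq X$ enters only through the fact that $C^*$ and the $C'$ we compete against lie in the same $C$, which is exactly what makes the inequality $\bar f_{C'} \ge \opt$ hold; no geometric structure of $X$ beyond the metric axioms is used, which is why the bound applies uniformly to the Fr\'echet and Hausdorff settings of interest.
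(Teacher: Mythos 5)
Your proof is correct, and it takes a genuinely different route from the paper. The paper ports the Langberg--Schulman argument essentially verbatim: it uses Markov's inequality to show that at least half the $\mu$-mass of each Voronoi cell $\V(c_i^*)$ lies in $B(c_i^*, 2m_i)$, derives the lower bound $\bar f \ge \max\{0,\dist_i - 2m_i\}\mu_i/2$ for an arbitrary competitor $f$, combines it with $\bar f \ge \opt$ via a convex combination weighted by a parameter $\alpha$, performs a case analysis on where $\dist_i + \dist(x,c_i^*)$ over the constraint region is maximized, and finally optimizes $\alpha$ to get $(\sqrt{3}+\sqrt{2k})^2 \le 4k+6$. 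You instead integrate the triangle inequality $\dist(c_i^*,C') \le \dist(c_i^*,y) + f_{C'}(y)$ over the entire cell, which bounds $\mu_i\,\dist(c_i^*,C')$ by $\opt_i + \bar f_{C',i}$ directly and dispenses with Markov's inequality, the case analysis, and the $\alpha$-balancing; this is the Feldman--Langberg style of sensitivity bound and yields the sharper constant $k+2$. Both arguments use only the metric axioms and the fact that $C^*$ and $C'$ live in the same restricted set $C$, so both transfer to the $(k,C)$-median setting equally well. Your version buys a cleaner proof and a better constant (your handling of a possibly unattained infimum via $\eta\to 0$ is also more careful than the paper's); the paper's version has the advantage of being structurally identical to the bicriteria variant in its Lemma~\ref{lem:bicriteria}, where the Markov-based localization around the approximate centers is reused, and the looser constant is harmless since the coreset size depends only polynomially on $S$. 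One small bookkeeping note: with the $(1+\eta)$-approximate $C^*$ you have $\sum_i \opt_i \le (1+\eta)\opt$ rather than $=\opt$, so the final sum is $k + 2(1+\eta)$ before taking $\eta \to 0$; this does not affect the conclusion.
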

The proof of this lemma is similar to that of the lemma below, and is based on the proof of~\cite[Theorem 3.1]{LS10}. The main difference is that we are now in the more general setting of the $(k,C)$-median problem. See Section~\ref{sec:sensitivity_proof} in the Appendix for the full proof.

\subparagraph{Sensitivity upper bound via bicriteria approximation.}
We show how to compute  upper bounds on the individual sensitivity of each point in $X$, while still making sure that the corresponding upper bound on the total sensitivity of $F_{k,C}$ is not too big, since the size of our coreset will depend on it.

Given $\alpha, \beta \ge 1$, an $(\alpha, \beta)$-approximation to an instance of the $(k,C)$-median problem is a set of at most $\beta k$ centers from $C$ such that cost of clustering using these centers is at most $\alpha$ times the cost of the optimal clustering. Hence, the function $f$ associated with these centers satisfies
\begin{align*}
\overline{f} = \int_X f d\mu \le \alpha \opt
\end{align*}
where $\opt$ is the optimal cost of the $(k,C)$-median problem.

\begin{lemma}
\label{lem:bicriteria}
Given an $(\alpha, \beta)$-approximation to an instance of the $(k,C)$-median problem, we can compute for all $x \in X$, a value $s(x) \ge \sens(x)$ satisfying $S = \int_X s(x) d\mu \le 6\alpha + 4\beta k$.
\end{lemma}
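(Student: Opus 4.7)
The plan is to define a sensitivity upper bound $s(x)$ by exploiting the cluster structure of the bicriteria approximation, and then to bound its integral by integrating term-by-term. Let $B = \{b_1, \ldots, b_q\} \subseteq C$ with $q \le \beta k$ be the centers of the bicriteria approximation, and partition the support into clusters $B_1, \ldots, B_q$ where $B_i$ consists of points whose nearest center in $B$ is $b_i$. Let $\mu_i = \mu(B_i)$, let $a_i = (1/\mu_i) \int_{B_i} \dist(x, b_i) \, d\mu(x)$ be the average cost within cluster $B_i$, and let $\nu = \sum_i \mu_i a_i \le \alpha \opt$ be the total bicriteria cost (all of which are computable from the given bicriteria solution).

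Next, I would bound $f_{c_1,\ldots,c_k}(x) = \dist(x, C')$ for any $x \in B_i$ and any $k$-center set $C' \subseteq C$ via a two-step triangle inequality. First, $\dist(x, C') \le \dist(x, b_i) + \dist(b_i, C')$. Second, for any $x' \in B_i$, $\dist(b_i, C') \le \dist(b_i, x') + \dist(x', C')$, and averaging this over $x' \in B_i$ against $\mu/\mu_i$ gives $\dist(b_i, C') \le a_i + \overline{f}_{C'}/\mu_i$ where $\overline{f}_{C'} = \cost(\mu, C')$. Combining, $f_{C'}(x) \le \dist(x, b_i) + a_i + \overline{f}_{C'}/\mu_i$.

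Then I would divide by $\overline{f}_{C'}$. Because $C'$ is a feasible $k$-center set, $\overline{f}_{C'} \ge \opt \ge \nu/\alpha$, so every occurrence of $1/\overline{f}_{C'}$ can be replaced by $\alpha/\nu$, producing a bound on $f_{C'}(x)/\overline{f}_{C'}$ that is independent of $C'$. Taking the supremum over $C'$ justifies setting $s(x) = \alpha \dist(x, b_i)/\nu + \alpha a_i/\nu + 1/\mu_i$ for $x \in B_i$, which satisfies $s(x) \ge \sens(x)$ by construction and is computable from the bicriteria solution alone.

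Finally, integration is straightforward: $\int_X (\alpha \dist(x, b(x))/\nu)\, d\mu = \alpha$ by the definition of $\nu$, $\int_X (\alpha a_{i(x)}/\nu)\, d\mu = \alpha \sum_i \mu_i a_i / \nu = \alpha$, and $\int_X (1/\mu_{i(x)})\, d\mu = \sum_i \mu_i \cdot (1/\mu_i) = q \le \beta k$. Summing these contributions yields $S \le 2\alpha + \beta k$, comfortably within the stated $6\alpha + 4\beta k$; the extra slack in the stated constants is consistent with a more conservative derivation (e.g., passing through a ``core'' of each cluster via a Markov-type argument, which inflates each of the three terms by a small constant). The main obstacle is not in the computation of $S$ but in ensuring the argument works uniformly over the (possibly infinite and highly structured) center family $C$, such as low-complexity polygonal curves under $\dist_F$ or point sets under $\dist_H$; since every step relies only on the triangle inequality in $(X, \dist)$ and on $C' \subseteq C$ being a feasible solution so that $\overline{f}_{C'} \ge \opt$, the bound transfers without modification to these settings.
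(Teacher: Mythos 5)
Your argument is correct and in fact proves a stronger bound than the lemma claims, but it reaches it by a genuinely different route from the paper. The paper also partitions $X$ into the Voronoi cells of the $\beta k$ bicriteria centers and uses the same quantities ($\mu_i$, the per-cluster average $m_i$, and $\opt'=\sum_i\mu_i m_i\le\alpha\opt$), but it then \emph{lower-bounds} $\overline{f'}$ for an arbitrary competitor $f'=f_{\{c'_1,\dots,c'_k\}}$: by Markov's inequality at least half the mass of $\V(c_i)$ sits in $\ball(c_i,2m_i)$, so $\overline{f'}\ge\max\{0,\dist_i-2m_i\}\mu_i/2$ where $\dist_i=\dist(c_i,C')$; combining this with $\overline{f'}\ge\opt'/\alpha$ and maximizing the ratio $(\dist_i+\dist(x,c_i))/\overline{f'}$ over $\dist_i$ (the extremes $\dist_i=2m_i$ and $\dist_i=\infty$) yields $s(x)=2\alpha(2m_i+\dist(x,c_i))/\opt'+4/\mu_i$ and hence $S\le 6\alpha+4\beta k$. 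You instead \emph{upper-bound} $f_{C'}(x)$ directly: averaging the triangle inequality $\dist(b_i,C')\le\dist(b_i,x')+\dist(x',C')$ over $x'\in B_i$ gives $\dist(b_i,C')\le a_i+\overline{f}_{C'}/\mu_i$, so $f_{C'}(x)/\overline{f}_{C'}\le\alpha(\dist(x,b_i)+a_i)/\nu+1/\mu_i$, which integrates to $2\alpha+\beta k$. Your averaging step replaces both the Markov argument and the case analysis over $\dist_i$, and buys better constants; the paper's version costs a factor roughly $2$--$4$ in each term but produces the exact formula for $s(p)$ that is hard-coded into the algorithm of Fig.~\ref{fig:alg}, so if your $s$ were adopted the algorithm's sensitivity-assignment step would need to be updated to match. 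Both arguments use only the triangle inequality and $\overline{f}_{C'}\ge\opt$, so both transfer verbatim to the Fr\'echet and Hausdorff settings. One minor housekeeping point common to both proofs: clusters with $\mu_i=0$ should be discarded so that $1/\mu_i$ is well defined.
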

\begin{proof}
Let $f = f_{\{c_1, \ldots, c_{\beta k}\}}$ be the function associated with the bicriteria approximation, i.e., $\bar{f} \le \alpha \opt$. Let $\bar{f} = \opt'$, and let $\V(c_i)$ be the Voronoi cell of $c_i$. Further, let $\mu_i = \mu(\V(c_i))$, and $m_i = \frac{1}{\mu_i} \int_{\V(c_i)} \dist(x,c_i)d\mu$, so that $\opt' = \sum_i \mu_i m_i$. By Markov's inequality, for each $i$, $\mu\left(\ball(c_i, 2m_i) \cap \V(c_i)\right) \ge \mu_i/2$.

Next, we show that
\begin{align*}
s(x) = \frac{2\alpha(2m_i + \dist(x,c_i))}{\opt'} + \frac{4}{\mu_i} \ge \sens(x).
\end{align*}

Let $x \in \V(c_i)$, and let $f' = f_{\{c'_1, \ldots, c'_k\}}$ for any set of $k$ points $\{c'_1, \ldots, c'_k\} \subseteq C$. Further, let $c'$ denote the closest point to $c_i$ in $\{c'_1, \ldots, c'_k\}$, and let $\dist_i = \dist(c', c_i)$. Then by triangle inequality and the preceding inequality
\begin{align*}
\bar{f'} = \int\limits_{X} f' d\mu
\ge \int\limits_{\V(c_i)} f' d\mu
\ge  \int\limits_{\ball(c_i,2m_i) \cap \V(c_i)} f' d\mu
\ge \max \{ 0, \dist_i - 2m_i\} \mu_i/2.
\end{align*}
Also, $\bar{f'} \ge \opt'/\alpha$. Hence, $\bar{f'} \ge \max\{0, \dist_i - 2m_i\} \mu_i/4 + \opt'/2\alpha$. We thus have
\begin{align*}
\sens(x)
&= \max_{f'} f'(x)/\bar{f'}\\
&= \max_{f'} \dist(x, c') / \bar{f'}\\
&\le \max_{f'} \frac{\dist_i + \dist(x,c_i)}{\max\{0, \dist_i - 2m_i\} \mu_i/4 + \opt'/2\alpha}\\
&\le \max_{\dist_i \ge 2m_i} \frac{\dist_i + \dist(x,c_i)}{(\dist_i - 2m_i) \mu_i/4 + \opt'/2\alpha}.
\end{align*}
The right hand side is maximized either at $\dist_i = 2m_i$ or $\dist_i = \infty$. We conclude that
\begin{align*}
\sens(x) \le \max\left\{\frac{2\alpha(2m_i + \dist(x,c_i))}{\opt'}, \frac{4}{\mu_i}\right\} \le\frac{2\alpha(2m_i + \dist(x,c_i))}{\opt'} + \frac{4}{\mu_i}.
\end{align*}

Further,
\begin{align*}
S
&= \int\limits_X s(x) d\mu = \sum_i \int\limits_{\V(c_i)} s(x) d\mu \\
&\le \sum_i \int\limits_{\V(c_i)} \left( \frac{4\alpha m_i + 2\alpha \dist(x,c_i)}{\opt'} + \frac{4}{\mu_i}\right) d\mu \\
&= \sum_i \left(  \frac{4 \alpha m_i \mu_i}{\opt'} + \frac{2\alpha m_i \mu_i}{\opt'}+ 4\right)\\
&\le 6 \alpha + 4 \beta k.
\end{align*}
\end{proof}
\section{Well-behaved subsets and cover codes}
\label{sec:simplicity_cover}
We formally characterize the subset $C$ that allows us to compute small coresets for clustering.

\subparagraph{Well-behaved subsets.}
Given a (possibly infinite) collection of sets $\A = \{A_1, A_2, \ldots \}$ from some universe $U$, we define the following equivalence relation $\sim$ on $U$ induced by $\A$ : we say $x \sim y$ whenever $x,y$ lie in the same sets in $\A$, i.e, 
\begin{align*}
x \sim y \Leftrightarrow \{i \mid x \in A_i\} = \{i \mid y \in A_i\}.
\end{align*}
We denote the set of equivalence classes by $\eqclass_{\A}^U$; note that the equivalence classes partition the universe $U$.

Let $h : \reals_{\ge 0} \rightarrow \reals_{\ge 0}$ be a non-decreasing function. Given a metric space $\X = (X, \dist)$, a metric ball $B(p,r)$ (for some $p \in X, r > 0$) is defined as $B(p,r) = \{x \in X \mid \dist(p,x) \le r\}$. Then, $\X$ is said to be \textbf{$h$-well-behaved} w.r.t. a (possibly infinite) subset $C \subseteq X$, iff for any collection of $n$ sets $\ball = \{B_1, \ldots, B_n\}$ in $X$ (where each $B_i$ is a Boolean combination, i.e., obtained by union, intersection and complement, of a constant number of metric balls of arbitrary centers and radii in $\X$),
$C$ has a non-empty intersection with at most $h(n)$ equivalence classes from $\eqclass_{\ball}^X$.
Roughly, it is a measure of the simplicity of $C$, in that it upper bounds the total number of different combinations of sets in $\ball$ that can be simultaneously intersected by $C$.

\begin{figure}[h]
\centering
\includegraphics[width=\textwidth]{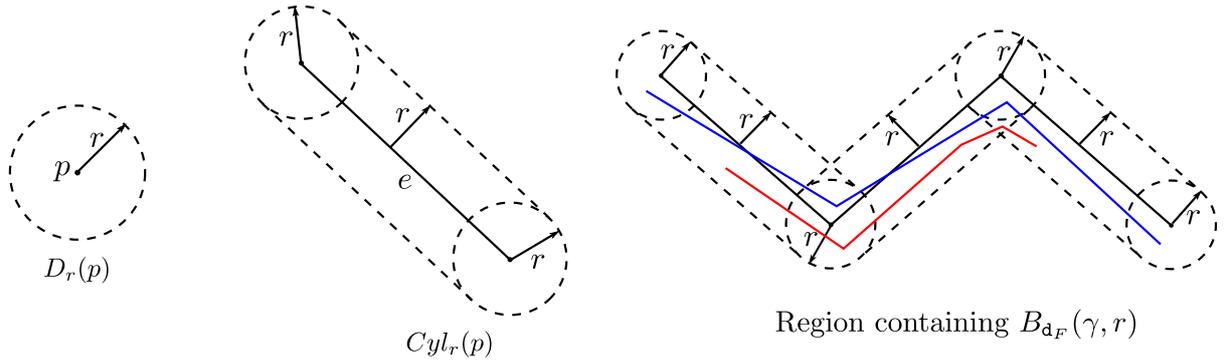}
\caption{Basic shapes $D_r(p)$ and $Cyl_r(p)$, and the region containing the Fr\'{e}chet ball $B_{\dist_F}(\traj,r)$ for the solid black curve $\traj$ with four vertices. The shapes and ball are the union of the regions with dotted boundary. Not all curves that lie in the region for $B_{\dist_F}(\traj,r)$ have Fr\'{e}chet distance to $\traj$ at most $r$, but any curve in $B_{\dist_F}(\traj,r)$ must lie inside the dotted region (e.g., the blue curve is in $B_{\dist_F}(\traj,r)$ but the red curve is not).}
\label{fig:frechet_ball}
\end{figure}

\begin{lemma}
\label{lem:cfd_well_behaved}
$(T^m_{poly}, \dist_F)$ is $h$-well-behaved w.r.t. $T^l_{poly}$, where $h(n) = (cnm/d)^{dl}$ for some constant $c > 0$.
\end{lemma}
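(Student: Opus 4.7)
The approach is to parameterize curves in $T^l_{poly}$ by the $dl$ real coordinates of their $l$ vertices, realize each $B_i \in \ball$ as a semi-algebraic subset of $\reals^{dl}$ of bounded polynomial complexity, and then invoke Warren's theorem to count the realized sign patterns. The target bound $(cnm/d)^{dl}$ strongly suggests Warren's theorem with $N = O(nml)$ polynomials in $D = dl$ variables, since this yields $(O(N/D))^D = (O(nm/d))^{dl}$.

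The first step, which I expect to be the main obstacle, is to show that for any fixed $\traj \in T^m_{poly}$ and any $r > 0$, the predicate ``$\dist_F(\lambda, \traj) \le r$'' on $\lambda \in T^l_{poly}$ is a Boolean combination of $O(ml)$ polynomial inequalities of constant degree in the $dl$ coordinates of $\lambda$'s vertices. The right tool is the Alt--Godau free space diagram: an $l \times m$ grid whose per-cell free region is cut out by a single quadratic inequality (the squared distance between a parameterized point on an edge of $\lambda$ and a parameterized point on an edge of $\traj$), and whose reachable intervals along cell boundaries are propagated cell-by-cell by $O(1)$ sign tests per cell. The delicate point is to verify that this propagation does not cause the defining polynomial collection to blow up; the reason it does not is that every newly reachable interval on a cell boundary is just the intersection of the current ellipse with a segment whose endpoints are already roots of a fixed-degree polynomial, so the collection grows additively rather than through nested algebraic operations.

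Once this semi-algebraic description is in place, I will bundle across $\ball$. Since each $B_i$ is by hypothesis a Boolean combination of $O(1)$ Fr\'{e}chet balls, $B_i$ inherits a description via $O(ml)$ polynomial inequalities, and across all $n$ elements of $\ball$ we obtain a family $\mathcal{P}$ of $N = O(nml)$ constant-degree polynomials in $D = dl$ variables. Any two curves $\lambda_1, \lambda_2 \in T^l_{poly}$ realizing the same sign pattern of $\mathcal{P}$ automatically lie in the same subsets of $\ball$ and are thus equivalent under $\sim$, so the number of equivalence classes in $\eqclass_{\ball}^{T^m_{poly}}$ meeting $T^l_{poly}$ is at most the number of realized sign patterns of $\mathcal{P}$.

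Finally, Warren's theorem bounds the number of sign patterns realized by $N$ constant-degree polynomials in $D$ variables by $(c'N/D)^D$ for an absolute constant $c'$; substituting $N = O(nml)$ and $D = dl$ gives $(cnm/d)^{dl}$, matching the claimed $h(n)$. The bulk of the technical effort should therefore concentrate on step one, namely on carefully expressing the Alt--Godau reachability procedure as an $O(ml)$-size Boolean formula over constant-degree polynomial sign conditions; the rest is routine arrangement bookkeeping.
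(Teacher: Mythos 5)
Your route is genuinely different from the paper's, and it has a gap at exactly the point you flagged. The crux of your argument is that the predicate ``$\dist_F(\lambda,\traj)\le r$'' is a Boolean combination of $O(ml)$ constant-degree polynomial inequalities in the $dl$ coordinates of $\lambda$. The free-space diagram does contribute only $O(ml)$ quadrics (one per cell), but the Alt--Godau reachability propagation cannot be expressed over those alone: the reachable interval on a cell boundary is the free interval clipped from below by a value that originated as an endpoint of a free interval in some \emph{earlier} cell of the same row or column, and which earlier cell it is depends on the input. To write the decision as a single Boolean formula over sign conditions you must include a predicate for every comparison that could arise, i.e., comparisons between interval endpoints of \emph{pairs} of cells in a common row or column; this is $\Theta(ml(m+l))$ predicates in the worst case, not $O(ml)$. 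Your stated justification (that the collection ``grows additively'' because new endpoints are roots of fixed-degree polynomials) controls the degree, not the count. This quadratic blow-up is not an artifact: it is the same phenomenon that puts an $m^2$ (rather than $m$) in the known VC-dimension bound $O(d^2m^2\log(dlm))$ for continuous Fr\'{e}chet balls~\cite{DPP19}. With the corrected count, Warren's theorem gives $\bigl(c\,nm(m+l)/d\bigr)^{dl}$ rather than $(cnm/d)^{dl}$ --- still polynomial in $n$ raised to $dl$, and still enough for the downstream coreset bound after taking logarithms, but not the $h$ claimed in the lemma.

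For comparison, the paper avoids the free-space encoding entirely. It works in the ambient space $\reals^d$ rather than the configuration space $\reals^{dl}$: every curve in $B_{\dist_F}(\traj,r)$ must have all its points inside the union of $2m-1$ basic shapes (radius-$r$ disks about the vertices and radius-$r$ neighborhoods of the edges of $\traj$), so the $n$ sets of $\ball$ yield $O(nm)$ degree-two surfaces in $\reals^d$, whose arrangement has $(cnm/d)^d$ cells; a center curve in $T^l_{poly}$ is then identified by the $\le l$ cells containing its vertices, giving $((cnm/d)^d)^l$. That argument is coarser --- it uses only a necessary condition for ball membership and implicitly assumes that curves with vertices in the same cells fall in the same equivalence classes --- whereas your configuration-space encoding, once the predicate count is fixed, characterizes membership exactly and is in that sense the more robust (if quantitatively slightly weaker) argument.
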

\begin{proof}
 For a given $r > 0$ and $\traj = \langle p_1, \ldots, p_m \rangle$, the Fr\'{e}chet metric ball of radius $r$ centered at $\traj$, denoted by $B_{\dist_F} (\traj,r)$, is contained in the union of the following \emph{basic} shapes in $\reals^d$ (see Fig.~\ref{fig:frechet_ball})

(i) $m$ disks $D_r(p) = \{ x \in \reals^d \mid \norm{(x-p)} \le r \}$ of radius $r$ centered at each vertex $p$ of $\traj$, and 

(ii) $m-1$  closed Euclidean balls $Cyl_r(e) = \{ x \in \reals^d \mid \exists y \in e \text{ s.t. } \norm{x-y} \le r\}$ centered along each edge $e$ of $\traj$. 

Hence, the \emph{points} of any curve in a metric ball in $(T^m_{poly}, \dist_F)$ must lie in the union of at most $2m-1$ such  shapes (see Fig.~\ref{fig:frechet_ball}). The argument can be extended for any curve lying in the Boolean combination of a constant number of metric balls in $(T^m_{poly}, \dist_F)$, in which case the curve's points must lie in a region of $\reals^d$ defined by $O(m)$ basic shapes (provided the Boolean output set is non-empty). The boundary of each such basic shape is, in turn, given by the zero set of a constant number of polynomials in $\reals^d$ of degree at most two. Thus, the boundary of the Boolean combination of a constant number of metric balls in $(T^m_{poly}, \dist_F)$ can be defined using the zero sets of $O(m)$ polynomials in $\reals^d$ of degree at most two (note that the output of the Boolean operations forms a connected set in $\reals^d$); see Fig.~\ref{fig:frechet_boolean}.

\begin{figure}[h]
\centering
\includegraphics[width=\textwidth]{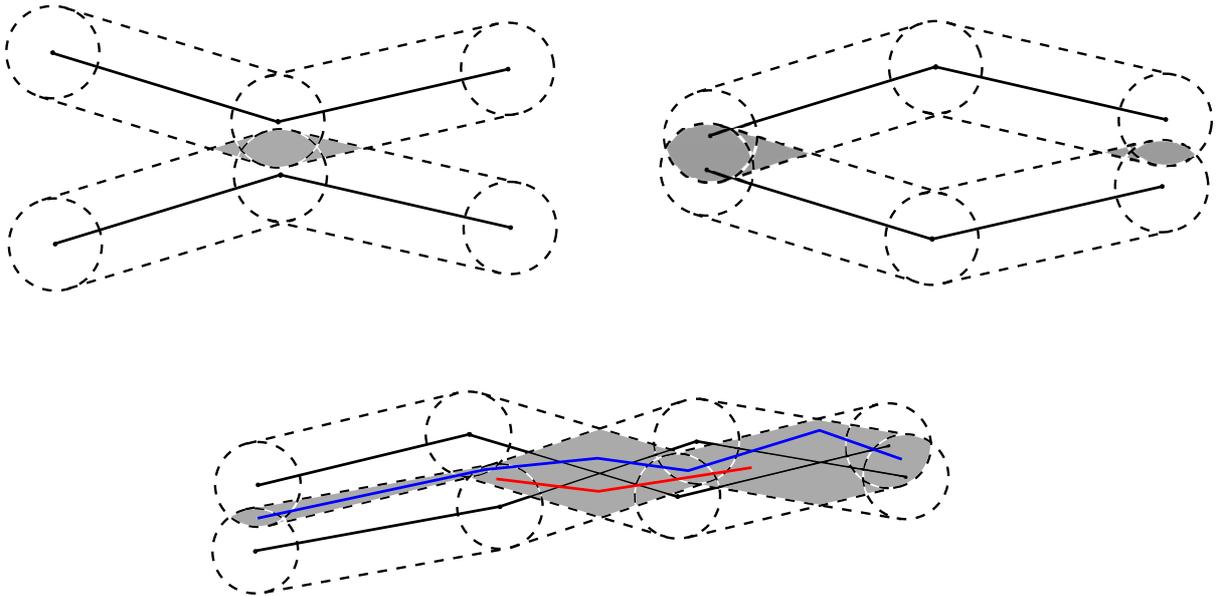}
\caption{Pairwise intersection of Fr\'{e}chet balls. Center curves are solid black, and the ball regions have dotted boundary. Even though the two ball regions in each of the top two figures have non-empty intersection (shown in grey), the intersection of the two Fr\'{e}chet balls is empty (i.e., has no curves). Any curve in the intersection of two Fr\'{e}chet balls must lie in the intersection between the regions containing the two Fr\'{e}chet balls, but not vice versa. e.g., in the bottom figure, both the blue and red curves lie in the intersection of the ball regions (in grey), but only the former is in the Fr\'{e}chet ball of both curves, while the latter is in none.}
\label{fig:frechet_boolean}
\end{figure}

Consider $n$ sets $\ball = \{B_1, \ldots, B_n\}$, each $B_i \subseteq T^m_{poly}$ being the Boolean combination of a  constant number of metric balls in $(T^m_{poly}, \dist_F)$. From the discussion in the preceding paragraph, the boundary of the sets in $\ball$ can be defined using zero sets of $O(nm)$ polynomials in $\reals^d$ of degree at most two. Consider the arrangement $\EuScript{Z}$ defined by these zero sets -- the number of cells in the arrangement is $(cnm/d)^d$ for some constant $c > 0$ (from~\cite[Theorem 6.2.1]{Mat13}, see also~\cite[Corollary 5.6]{Mat09}). Then, a curve $\traj' \in T^l_{poly}$ that lies in an equivalence class of $\eqclass_{\ball}^{T^m_{poly}}$ must have all its points contained completely in $\cup_i B_i$, and hence can be identified by the at most $l$ cells of  the arrangement $\EuScript{Z}$ that its vertices lie in (see Fig.~\ref{fig:frechet_boolean}). By a counting argument, the number of such curves $\traj'$, and thereby the number of equivalence classes of $\eqclass_{\ball}^{T^m_{poly}}$ having non-empty intersection with $T^l_{poly}$, is at most $(cnm/d)^{dl}$.
\end{proof}

\begin{lemma}
\label{lem:dfd_well_behaved}
$(T^m_{poly}, \dist_{dF})$ is $h$-well-behaved w.r.t. $T^l_{poly}$, where $h(n) = (cnm/d)^{dl}$ for some constant $c > 0$.
\end{lemma}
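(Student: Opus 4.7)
The plan is to mimic the proof of Lemma~\ref{lem:cfd_well_behaved}, exploiting the simpler geometry of the discrete Fréchet distance: only vertices play a role in $\dist_{dF}$, so the edge cylinders $Cyl_r(e)$ used in the continuous case are no longer needed, and the bookkeeping collapses to tracking vertex positions alone.

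First I would observe that for $\traj = \langle p_1, \ldots, p_m\rangle$ and $r > 0$, any $\traj' \in B_{\dist_{dF}}(\traj, r)$ must admit a monotone correspondence in which each vertex of $\traj'$ is paired with some $p_i$ at distance at most $r$. Hence every vertex of $\traj'$ lies in the union of the $m$ closed Euclidean disks $D_r(p_1), \ldots, D_r(p_m)$ in $\reals^d$. It follows that the vertices of any curve lying in a Boolean combination of a constant number of $\dist_{dF}$-balls in $(T^m_{poly}, \dist_{dF})$ must lie in a region of $\reals^d$ determined by $O(m)$ such basic disks, whose boundary is cut out by $O(m)$ polynomials in $\reals^d$ of degree at most two.

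For $n$ sets $\ball = \{B_1, \ldots, B_n\}$, each a Boolean combination of constantly many $\dist_{dF}$-balls, the corresponding boundaries in $\reals^d$ are therefore given by $O(nm)$ degree-two polynomials, and the arrangement $\EuScript{Z}$ of these zero sets has at most $(cnm/d)^d$ cells (by~\cite[Theorem 6.2.1]{Mat13}). Any curve $\traj' \in T^l_{poly}$ in an equivalence class of $\eqclass_{\ball}^{T^m_{poly}}$ can then be identified by recording, for each of its at most $l$ vertices, which cell of $\EuScript{Z}$ it lies in: this cell data fixes the sign pattern of every polynomial and hence determines exactly which basic disks contain each vertex, which together with the fixed linear ordering of vertices in $\traj'$ and in each center curve determines membership of $\traj'$ in each $B_i$. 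A counting argument then yields at most $(cnm/d)^{dl}$ equivalence classes intersecting $T^l_{poly}$, which is the desired bound.

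The main thing to verify — and essentially the only place the argument differs from the continuous case — is that cell membership of individual vertices really does pin down the equivalence class despite the monotone-correspondence requirement in $\dist_{dF}$. This should be modest: once one notes that whether a valid monotone correspondence exists is a purely combinatorial function of which center-curve disks can be matched to which vertices of $\traj'$ together with the two fixed vertex orderings, and that both pieces of information are encoded in the sign patterns (and hence in the arrangement cells), the overcount by cell-tuples is legitimate and the rest of the proof goes through verbatim from Lemma~\ref{lem:cfd_well_behaved}.
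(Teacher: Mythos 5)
Your proposal is correct and follows essentially the same route as the paper's proof: vertices of any curve in a $\dist_{dF}$-ball lie in the union of $m$ disks, Boolean combinations of the $n$ sets yield $O(nm)$ degree-two polynomials whose arrangement has $(cnm/d)^d$ cells, and each equivalence class meeting $T^l_{poly}$ is determined by the tuple of at most $l$ cells containing a representative curve's vertices. Your explicit check that the cell sign-patterns (plus the fixed vertex orderings) determine the existence of a monotone correspondence, and hence ball membership, is a point the paper leaves implicit but is exactly the right thing to verify.
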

\begin{proof}
Given curve $\traj = \langle p_1, \ldots, p_m \rangle$, any curve $\traj'$ with $\dist_{dF}(\traj,\traj') \le r$ for some $r > 0$ must have each of its vertices in one of the $m$ disks of radius $r$ centered at vertices of $\traj$ (note that such a metric ball in $(T^m_{poly}, \dist_{dF})$ need not form a connected set in $\reals^d$) (see Fig.~\ref{fig:discrete_frechet}).
 
\begin{figure}[h]
\centering
\includegraphics[width=\textwidth]{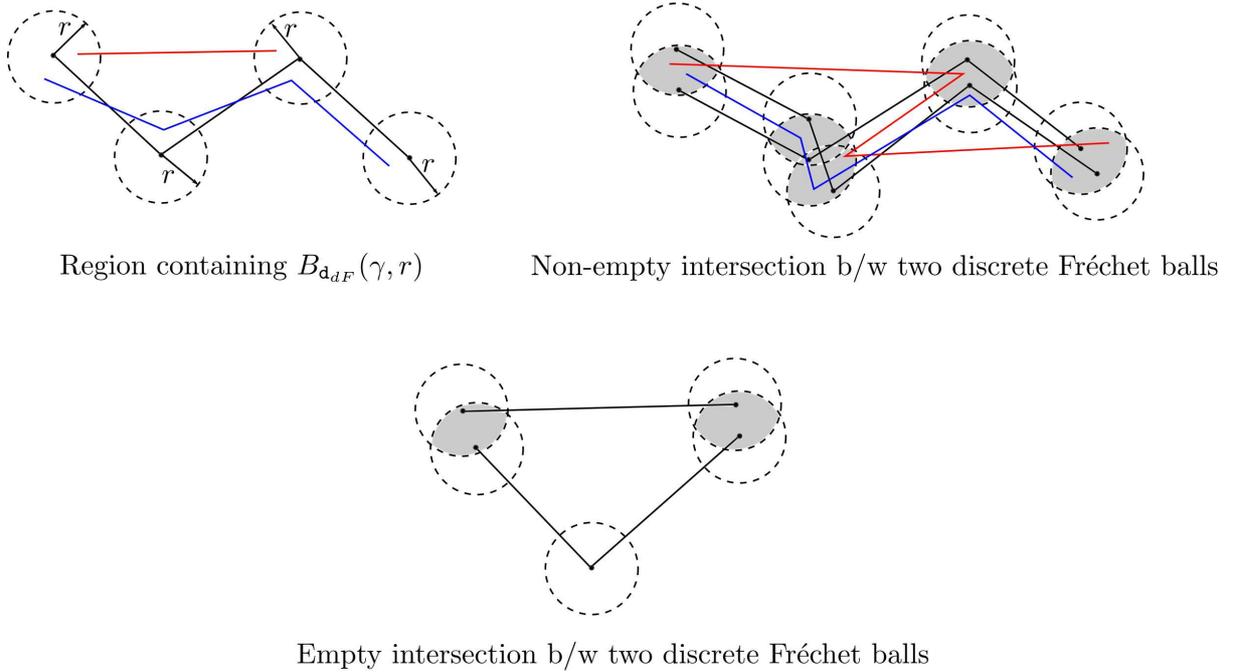}
\caption{First figure in top row shows the disks centered at vertices of the black curve $\traj$ -- any curve in $B_{\dist_{dF}}(\traj,r)$ must have its vertices inside these disks, but not vice versa (e.g., the blue curve is in $B_{\dist_{dF}}(\traj,r)$ but the red curve is not). Second figure in top row shows the intersection between two discrete Fr\'{e}chet balls -- any curve in this intersection must have its vertices in the grey regions, but not vice versa (e.g., the blue curve is in the intersection, the red curve is not). The figure on the bottom row shows two discrete Fr\'{e}chet balls with no curve in their intersection, even though the disks have non-zero intersection (the grey regions).}
\label{fig:discrete_frechet}
\end{figure}

Consider $n$ sets $\ball = \{B_1, \ldots, B_n\}$  where each $B_i \subseteq T^m_{poly}$ is a Boolean combination of a constant number of metric balls in $(T^m_{poly}, \dist_{dF})$. From the previous paragraph and the discussions in proof of Lemma~\ref{lem:cfd_well_behaved}, the boundary of each $B_i$ is defined by zero sets of $O(m)$ polynomials of degree at most two (see Fig.~\ref{fig:discrete_frechet}). Consider the arrangement $\EuScript{Z}$ formed by the zero sets of the $O(nm)$ polynomials for all sets in $\ball$ -- the number of cells in the arrangement $\EuScript{Z}$ is at most $(cnm/d)^d$ for some constant $c > 0$ (see~\cite[Theorem 6.2.1]{Mat13}, see also~\cite[Corollary 5.6]{Mat09}). Any curve $\traj' \in T^l_{poly}$ that lies in an equivalence class of $\eqclass_{\ball}^{T^m_{poly}}$ must have all its vertices in $\EuScript{Z}$, and  hence can be identified by the at most $l$ cells of the arrangement $\EuScript{Z}$ that its vertices lie in. Thus there are at most $(cnm/d)^{dl}$ such curves, which also bounds the number of equivalence classes of $\eqclass_{\ball}^{T^m_{poly}}$ having non-empty intersection with $T^l_{poly}$.
\end{proof}

Let $U^m$ denote the set of all finite point sets with at most $m$ points each. The proof of the following lemma is similar to that of Lemma~\ref{lem:dfd_well_behaved}, and is omitted for brevity.
\begin{lemma}
\label{lem:hd_well_behaved}
$(U^m, \dist_H)$ is $h$-well-behaved w.r.t. $U^l$, where $h(n) = (cnm/d)^{dl}$ for some constant $c > 0$.
\end{lemma}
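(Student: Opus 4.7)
My plan is to parallel the proof of Lemma~\ref{lem:dfd_well_behaved} essentially verbatim, exploiting the fact that a Hausdorff ball in $(U^m, \dist_H)$ admits the same ``union of disks'' description as a discrete Fr\'{e}chet ball. First I would observe that, for any point set $\ptset = \{p_1,\ldots,p_m\}$ and $r>0$, a point set $\ptset'$ lies in $B_{\dist_H}(\ptset, r)$ iff every point of $\ptset'$ lies in $\bigcup_{i=1}^m D_r(p_i)$ and, conversely, every disk $D_r(p_i)$ contains at least one point of $\ptset'$. Thus the points of any $\ptset' \in B_{\dist_H}(\ptset, r)$ are confined to the union of the $m$ disks of radius $r$ centered at the points of $\ptset$; as in the discrete Fr\'{e}chet case, this ball need not be a connected region of $\reals^d$.

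Following Lemma~\ref{lem:dfd_well_behaved}, the boundary of each such Hausdorff ball is then described by zero sets of $O(m)$ polynomials in $\reals^d$ of degree at most two, and this extends to any Boolean combination of a constant number of Hausdorff balls. For $n$ sets $\ball = \{B_1, \ldots, B_n\}$, each $B_i \subseteq U^m$ being such a Boolean combination, this produces $O(nm)$ polynomials whose arrangement $\EuScript{Z}$ in $\reals^d$ has at most $(cnm/d)^d$ cells by~\cite[Theorem 6.2.1]{Mat13}.

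Finally, I would argue that the membership of any $\ptset' \in U^l$ in each $B_i$ is determined purely by which cells of $\EuScript{Z}$ the (at most $l$) points of $\ptset'$ occupy: each defining disk is by construction a union of cells of $\EuScript{Z}$, so both the one-sided containment condition and the covering condition above are constant across point sets whose points lie in the same tuple of cells. Hence each equivalence class of $\eqclass_{\ball}^{U^m}$ intersecting $U^l$ can be identified by a tuple of at most $l$ cells of $\EuScript{Z}$, yielding at most $(cnm/d)^{dl}$ such classes after absorbing constants into $c$. The only genuine subtlety I anticipate is tracking the two-sided nature of the Hausdorff definition, but since the covering condition also depends only on the cell membership of the points of $\ptset'$, the counting argument goes through unchanged.
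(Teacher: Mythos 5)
Your proposal is correct and matches the route the paper intends: the paper omits this proof explicitly because it is the same arrangement-of-quadrics argument as Lemma~\ref{lem:dfd_well_behaved}, with the disks now centered at the points of a set rather than the vertices of a curve. Your extra care in noting that both directions of the Hausdorff condition (containment in the union of disks and the covering of every disk) are determined by the cells of $\EuScript{Z}$ occupied by the points of $\ptset'$ is exactly the right observation to make the cell-counting bound of $(cnm/d)^{dl}$ go through.
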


\subparagraph{Remark on VC-dimension.} Consider the range space with ground set $T^l_{poly}$ and ranges $\R_{\dist_F, l, m} = \{ B_{\dist_F}(\traj, r) \cap T^l_{poly} \mid r > 0, \traj \in T^m_{poly}\}$ defined by Fr\'{e}chet metric balls centered on curves in $T^m_{poly}$. The VC-dimension of this range space is $O(d^2m^2 \log(dlm))$~\cite{DPP19}. We can similarly define the range spaces $(T^l_{poly}, \R_{\dist_F, l, m})$ and $(U^l_{poly}, \R_{\dist_H, l, m})$ -- these have VC-dimension $O(dm\log(dlm))$~\cite{DPP19}.
Then, the quantity $h(n)$ gives an upper bound on the number of shattered subsets of a ground set of size $n$ for the dual range space of the preceding range spaces. This gives the following bound on the shattering dimension of the dual range space (and the VC dimension upto logarithmic factors), which is much better than those that can be obtained using the naive exponential bound and the results of~\cite{DPP19}.
\begin{corollary}
\label{cor:vc_dimension}
The shattering dimension of the dual of the range spaces $(T^l_{poly}, \R_{\dist_F, l, m})$, $(T^l_{poly}, \R_{\dist_{dF}, l, m})$, and $(U^l_{poly}, \R_{\dist_H, l, m})$ is $O(dl \log (m/d))$.
\end{corollary}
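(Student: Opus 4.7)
The plan is to translate the dual shattering-dimension question into a counting problem already solved by the well-behavedness lemmas. I focus on the continuous Fr\'{e}chet case; the other two cases follow identically, using Lemmas~\ref{lem:dfd_well_behaved} and~\ref{lem:hd_well_behaved} in place of Lemma~\ref{lem:cfd_well_behaved}.

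The dual of $(T^l_{poly}, \R_{\dist_F, l, m})$ has as its ground set the family of Fr\'{e}chet balls $\R_{\dist_F, l, m}$, with each curve $p \in T^l_{poly}$ producing the dual range $\R_p = \{B \in \R_{\dist_F, l, m} \mid p \in B\}$. For any chosen collection $\ball = \{B_1, \ldots, B_n\}$ of $n$ balls, the restricted family $\{\R_p \cap \ball \mid p \in T^l_{poly}\}$ encodes, for each curve $p$, the subset of $\ball$ that contains it; two curves yield the same subset iff they belong to the same equivalence class of the relation $\sim$ on $T^m_{poly}$ induced by $\ball$. (Membership of $p \in T^l_{poly}$ in a range $B_{\dist_F}(\traj, r) \cap T^l_{poly}$ is equivalent to membership in the underlying ball in $T^m_{poly}$, so I may legitimately view $\ball$ as a collection of ambient metric balls.) Hence the cardinality of the restriction equals the number of equivalence classes in $\eqclass_{\ball}^{T^m_{poly}}$ that meet $T^l_{poly}$, and since each $B_i$ is a trivial Boolean combination of a single metric ball, Lemma~\ref{lem:cfd_well_behaved} bounds this count by $h(n) = (cnm/d)^{dl}$.

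Converting to shattering dimension is then arithmetic. Writing $h(n) = (cm/d)^{dl} \cdot n^{dl}$ and taking logarithms,
\begin{align*}
\log_2 h(n) \le dl \log_2(cm/d) + dl \log_2 n,
\end{align*}
so $h(n) \le n^{\delta}$ holds for every $n \ge 2$ whenever $\delta \ge dl\bigl(1 + \log_2(cm/d)\bigr)$, which is $O(dl \log(m/d))$, exactly matching the claimed bound. The only subtle step is the first one -- identifying the hitting patterns of $\ball$ produced by curves in $T^l_{poly}$ with the equivalence classes of $\eqclass_{\ball}^{T^m_{poly}}$ meeting $T^l_{poly}$ -- after which the proof is a direct invocation of the appropriate well-behavedness lemma followed by a logarithmic manipulation.
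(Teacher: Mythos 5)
Your proposal is correct and follows essentially the same route as the paper: the paper's remark preceding the corollary observes that $h(n)$ bounds the number of distinct restricted dual ranges on any $n$ balls (precisely your identification of hitting patterns with equivalence classes of $\eqclass_{\ball}^{T^m_{poly}}$ meeting $T^l_{poly}$), and then reads off the shattering dimension from $h(n) = (cnm/d)^{dl}$. Your write-up simply makes explicit the two steps the paper leaves implicit, namely the duality translation and the logarithmic bookkeeping yielding $\delta = O(dl\log(m/d))$.
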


\subparagraph{Small cover codes.}
We discuss the notion of \emph{cover codes} as introduced in \cite{LS10}. Intuitively, a cover code is a subset of functions from the function family $F$ that approximates the set $F$ with respect to a finite subset of the support in $X$. A small-sized cover code plays a crucial role in sidestepping a naive union bound while extending Lemma~\ref{lem:conc} for all functions in F, analogous to the role that a bounded VC-dimension plays, e.g., in~\cite{HW87}.

We now formally define cover codes. Let $A \subseteq X$ be a subset with $a$ elements. For any $g : X \rightarrow \reals$, let $\nu_A(g) = (1/a) \sum_{x \in A} g(x)$. For $f,f' \in F$ and $x \in A$, we define $\hat{f} = \nu_A(f/s)$ (where $s$ is an upper bound on sensitivity) and
\begin{align*}
D_{A,x} (f,f') = \left| \frac{f(x)}{\hat{f} \cdot s(x)} - \frac{f'(x)}{\hat{f'} \cdot s(x)} \right| .
\end{align*}

Let $S$ be the upper bound on total sensitivity computed using $s$. Then, $F' \subseteq F$ is an \textbf{\emph{$\eps$-cover-code}} for $(F,A,s)$ for some $\eps > 0$ iff for all $f \in F$ there exists an $f' \in F'$ such that
\begin{enumerate}
\item[(i)] $\frac{\overline{f'}}{\hat{f'}} \le \frac{\overline{f}}{\hat{f}}$, and
\item[(ii)] For all $x \in A$, $D_{A,x}(f,f') \le \frac{\eps}{64S} \left( 1 + \frac{f(x)}{\hat{f} \cdot s(x)} + \frac{f'(x)}{\hat{f'} \cdot s(x)} \right)$.
\end{enumerate}

Coming back to our setting of the $(k,C)$-median problem, the following theorem is the main result of this section, and states that if $C$ is simple enough, then $F_{k,C}$ has small cover codes.

\begin{theorem}
\label{thm:cov}
Let $A \subseteq X$ be of size $a$. If $\X$ is $h$-well-behaved w.r.t. $C$, then $F_{k,C}$ has an $\eps$-cover-code of size $\left(h\left(\left( \frac{Sa}{\eps}\right)^{\Theta(1)}\right)\right)^k$ for any $\eps > 0$.
\end{theorem}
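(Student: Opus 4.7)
The plan is to partition the $k$-tuples in $C^k$ into equivalence classes based on a ``signature'' with respect to a carefully chosen family $\ball$ of metric balls around the points of $A$, pick one representative per class, and bound $|F'|$ by the number of signatures using the well-behaved property of $\X$ with respect to $C$.

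First I would build $\ball$. Set $\gamma = \Theta(\eps/S)$ and, for each $x \in A$, include in $\ball$ the concentric metric balls $B(x,\, s(x)(1+\gamma)^j)$ for $j \in \{-M, \ldots, M\}$, where $M = \Theta((S/\eps)\log(Sa/\eps))$. This yields $L = 2M+1$ balls per sample point and a total size $|\ball| = aL = (Sa/\eps)^{\Theta(1)}$. The normalization by $s(x)$ and the geometric spacing ensure that membership in this nested family pins down $\dist(x,c)/s(x)$ to within a multiplicative factor of $(1+\gamma)$ as long as it lies in $[(1+\gamma)^{-M},(1+\gamma)^M]$; the choice of $M$ is made so that values outside this range are negligible for the cover-code conditions.

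Next I would define the signature of a tuple $(c_1, \ldots, c_k) \in C^k$ as the ordered $k$-tuple of equivalence classes in $\eqclass_\ball^X$ to which $c_1, \ldots, c_k$ belong (each $B \in \ball$ is itself a single metric ball, hence a trivially admissible Boolean combination). By the $h$-well-behaved assumption, at most $h(|\ball|) = h((Sa/\eps)^{\Theta(1)})$ classes of $\eqclass_\ball^X$ meet $C$, so the number of distinct signatures is at most $\left(h((Sa/\eps)^{\Theta(1)})\right)^k$. For each realized signature, pick as representative a $k$-tuple that minimizes $\overline{f'}/\hat{f'}$ over all tuples with that signature, and let $F'$ be the associated collection of functions in $F_{k,C}$.

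Finally I would verify the two cover-code conditions for an arbitrary $f = f_{c_1, \ldots, c_k} \in F_{k,C}$ and its representative $f' = f_{c'_1, \ldots, c'_k} \in F'$. Condition (i) is immediate by the minimality in the choice of representative. For condition (ii), because $c_i$ and $c'_i$ share an equivalence class in $\eqclass_\ball^X$ they lie in the same subset of balls around every $x \in A$; hence $\dist(x, c_i)$ and $\dist(x, c'_i)$ lie in the same shell around $x$, differing by a factor of at most $(1+\gamma)$ whenever both are in the relevant range. Taking the minimum over $i$ and averaging over $A$ propagates this to $f(x) \approx_{1+\gamma} f'(x)$ and $\hat{f} \approx_{1+\gamma} \hat{f'}$, modulo a small additive slack from the innermost and outermost shells; plugging these into $|f(x)/(\hat{f}\, s(x)) - f'(x)/(\hat{f'}\, s(x))|$ with $\gamma = \Theta(\eps/S)$ yields the bound required by condition (ii). The main obstacle is precisely this last step --- carefully pushing the multiplicative-plus-additive error through the double normalization by $\hat{f}$ and $s(x)$ while choosing $M$ large enough that the shells below radius $(1+\gamma)^{-M}s(x)$ and above $(1+\gamma)^{M}s(x)$ each contribute at most an $\eps/S$ fraction of $\hat{f}\, s(x)$ to either $f(x)$ or $f'(x)$. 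It is this balancing of the error budget against the geometric spacing that forces $|\ball| = (Sa/\eps)^{\Theta(1)}$ and thereby the claimed $\left(h((Sa/\eps)^{\Theta(1)})\right)^k$ cover-code bound.
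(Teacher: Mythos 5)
Your high-level strategy (discretize the relevant distances via a family $\ball$ of balls around the points of $A$, use the $h$-well-behaved property to count the equivalence classes of $\eqclass_{\ball}^X$ that meet $C$, and take $k$-th powers) is the right one and matches the paper. However, there is a genuine gap in the step you yourself flag as the main obstacle, and it is not a technicality that more careful bookkeeping can absorb. Your shells $B(x, s(x)(1+\gamma)^j)$, $j \in \{-M,\ldots,M\}$, have radii fixed on an absolute scale determined only by $s(x)$, $S$, $a$, and $\eps$. But a metric space has no intrinsic scale: rescaling all distances by a huge factor leaves the sensitivities, $S$, and the clustering problem structurally unchanged, yet pushes every center of every $f \in F_{k,C}$ outside your outermost shell for every $x \in A$, collapsing all $k$-tuples into a single signature; symmetrically, shrinking all distances puts every center inside the innermost shell. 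In either case your $F'$ degenerates while $F_{k,C}$ does not. Concretely, the requirement you state --- that the innermost shell radius $(1+\gamma)^{-M}s(x)$ be at most an $\eps/S$ fraction of $\hat{f}\, s(x)$ --- would need $(1+\gamma)^{-M} \le \eps\hat{f}/S$ \emph{uniformly over all} $f \in F_{k,C}$, and $\hat{f}$ has no positive lower bound in terms of $S$, $a$, $\eps$ alone. No finite $M$ works.

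The paper's proof resolves exactly this by anchoring the discretization to $\hat{j}$, where $j = \arg\min_{f \in F_{k,C}} \hat{f}$: the shells $R_{x,i}$ have width $\hat{j}s(x)/(ap_1)$ and extend out to radius about $ap_1\hat{j}s(x)$. Since $\hat{j} \le \hat{f}$ for all $f$, the inner resolution is always fine enough; but the outer cutoff is only adequate when $\hat{f} \le p_1\hat{j}$ (Case 1 of the paper). For $\hat{f} > p_1\hat{j}$ the centers of $f$ can lie far outside all shells, and the paper switches to an entirely different construction: $g$ is taken constant on each Voronoi cell $\V_i$ of $j$, with values drawn from a geometric grid of $O(p_2\log p_2)$ multiples of a per-function scale $c_f$, together with a light/heavy index analysis driven by Lemma~\ref{lem:bound_hat_f}. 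This second regime --- where the approximation is by quantized constants rather than by snapping centers to representatives in $N$ --- is entirely missing from your proposal, and it is not recoverable within your single-grid framework. (Your idea of choosing the representative of each signature class to minimize $\overline{f'}/\hat{f'}$, so that condition (i) holds by fiat, is a clean alternative to the paper's appeal to~\cite[Lemma 7.1]{LS10} and would be worth keeping in a corrected writeup.)
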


We will prove this in two steps. First, we show that there exists a small set of functions $G$ (not necessarily in $F_{k,C}$) of size $\left(h\left(\left( \frac{Sa}{\eps}\right)^{\Theta(1)}\right)\right)^k$ such that for any $f \in F_{k,C}$, there exists a constant $c_f$ and a function $g \in G$ such that for any $x \in A$,
\begin{align}
\label{eq:g_req}
\left| \frac{f(x)}{s(x)} - \frac{c_f g(x)}{s(x)} \right| \le \frac{\eps}{256S}\hat{f}.
\end{align}
Second, $G$ does not quite give us the desired cover, since it may not lie in $F_{k,C}$. However, it can be shown that there exists $f_g \in F$ for each $g \in G$ such that the set $\{f_g \mid g \in G\}$ gives us the desired cover~\cite[Lemma 7.1]{LS10}.

We now show the existence of $G$; this will also prove Theorem~\ref{thm:cov}. The proof is rather technical, and borrows heavily from the proof of~\cite[Theorem 4.3]{LS10}. The important difference is an ingenious use of the $h$-well-behaved property of $\X$ w.r.t. $C$ to make sure the proof goes through in the $(k,C)$-median setting. See Section~\ref{sec:proof_cover} for the full proof.
\section{Computing coresets}
\label{sec:computing_coresets}

\begin{figure}[t]
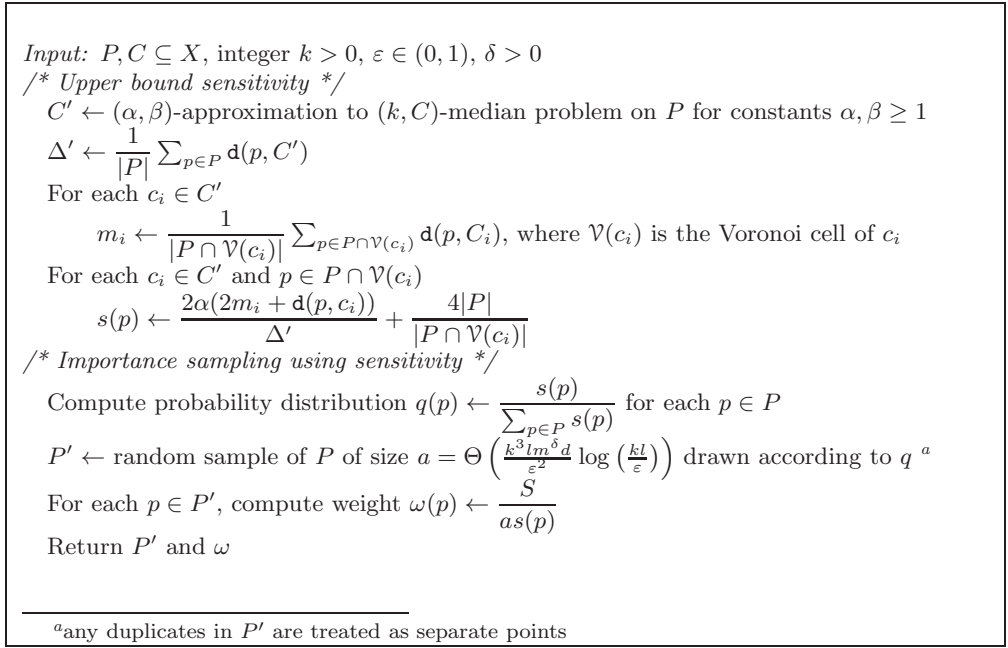

\centering\small
\begin{algorithm}
\\ \textit{Input:} $P, C \subseteq X$, integer $k > 0$, $\eps \in (0,1)$, $\delta > 0$
\\ \textit{/* Upper bound sensitivity */}\+
\\ $C' \leftarrow (\alpha, \beta)$-approximation to $(k,C)$-median problem on $P$ for constants $\alpha, \beta \ge 1$
\\ $\opt' \leftarrow \dfrac{1}{|P|}\sum_{p \in P} \dist(p, C')$
\\ For each $c_i \in C'$ \+
\\ $m_i \leftarrow \dfrac{1}{|P \cap \V(c_i)|} \sum_{p \in P \cap \V(c_i)} \dist(p,C_i)$, where $\V(c_i)$ is the Voronoi cell of $c_i$ \-
\\ For each $c_i \in C'$ and $p \in P \cap \V(c_i)$ \+
\\ $s(p) \leftarrow \dfrac{2\alpha(2m_i + \dist(p, c_i))}{\opt'} + \dfrac{4 |P|}{|P \cap \V(c_i)|}$\-\-
\\ \textit{/* Importance sampling using sensitivity */}\+
\\ Compute probability distribution $q(p) \leftarrow \dfrac{s(p)}{\sum_{p \in P} s(p)}$ for each $p \in P$
\\ $P' \leftarrow$ random sample of $P$ of size $a = \Theta\left(\frac{k^3lm^{\delta}d}{\eps^2}\log\left( \frac{kl}{\eps}\right)\right)$ drawn according to $q$ \footnote{any duplicates in $P'$ are treated as separate points}
\\ For each $p \in P'$, compute weight $\omega(p) \leftarrow \dfrac{S}{as(p)}$
\\ Return $P'$ and $\omega$
\end{algorithm}
\caption{Overall algorithm for computing coresets for $(k,C)$-median under Fr\'{e}chet and Hausdorff distances.}
\label{fig:alg}
\end{figure}

We give the overall algorithm for computing coresets in Fig.~\ref{fig:alg} for Fr\'{e}chet and Hausdorff distances.

Briefly, given input $P$, it first uses a bicriteria approximation to the $(k,C)$-median problem to upper bound the sensitivity $s(p)$ for each input point $p \in P$ (see Lemma~\ref{lem:bicriteria} and its proof).
It then performs importance sampling according to the distribution $q$. In order to bound the number of samples needed (and hence the overall size of the coreset), the following theorem is useful. It is akin to a general VC-type result, and states that small cover codes lead to succinct coresets using random sampling. The proof, in fact, uses a double sampling argument similar to the proof for small $\eps$-nets~\cite{HW87}, and holds for any function family $F$ with domain $X$ and total sensitivity at most $S$.

\begin{theorem}[\cite{LS10}, Theorem 4.4]
\label{thm:cover_coreset}
Suppose for some $a \ge 8(S-1)/\eps^2$, every $A \subseteq X$ of cardinality $|A| = 2a$ has an $\eps$-cover-code (w.r.t. $F$ and sensitivity bound $S$) of cardinality at most $\frac{1}{8}e^{\frac{a \eps^2}{100S^2}}$. Then, a sample of $a$ points from the probability distribution $q$ (with weight of each point $p$ being $S/as(p)$) is an $\eps$-approximator for $F$, with probability $\ge 1/2$.
\end{theorem}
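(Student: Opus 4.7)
The argument follows the double-sampling (Vapnik--Chervonenkis) template, adapted to weighted importance sampling. Let $\tilde f_R := \sum_{x \in R} \omega(x) f(x) = S \cdot \nu_R(f/s)$ denote the importance-sampling estimator, so the goal is to show $\Pr[\,\exists f \in F : |\tilde f_R - \overline f| > \eps \overline f\,] \le 1/2$. \emph{Step 1 (symmetrization).} Draw an independent ghost sample $R'$ of size $a$ from $q$. Lemma~\ref{lem:variance} bounds the single-sample variance by $(S-1)\overline f^2$, so Chebyshev combined with $a \ge 8(S-1)/\eps^2$ gives $\Pr[|\tilde f_{R'} - \overline f| > (\eps/2) \overline f] \le 1/2$ for each fixed $f$. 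Standard symmetrization then yields
\begin{align*}
\Pr\bigl[\,\exists f : |\tilde f_R - \overline f| > \eps \overline f\,\bigr] \ \le\ 2 \Pr\bigl[\,\exists f : |\tilde f_R - \tilde f_{R'}| > (\eps/2) \overline f\,\bigr].
\end{align*}

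\emph{Step 2 (reduction via the cover code).} Condition on the multiset $A := R \cup R'$ of size $2a$. By hypothesis, there is an $\eps$-cover-code $F' \subseteq F$ for $(F,A,s)$ of cardinality at most $\tfrac{1}{8} e^{a\eps^2/(100 S^2)}$. For any $f \in F$ pick $f' \in F'$ satisfying conditions (i)--(ii) of the cover-code definition; summing (ii) over $x \in R$ and over $x \in R'$, weighted by $\omega(x) = S/(a\, s(x))$, bounds $|\tilde f_R - \tilde f_{R'}|$ by $|\tilde{f'}_R - \tilde{f'}_{R'}|$ plus an additive slack of at most $(\eps/4) \overline f$, with condition (i) used to relate $\overline{f'}/\hat{f'}$ to $\overline f/\hat f$. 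Hence it suffices to show that, conditional on $A$, every $f' \in F'$ satisfies $|\tilde{f'}_R - \tilde{f'}_{R'}| \le (\eps/4) \overline{f'}$ with probability $\ge 3/4$.

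\emph{Step 3 (Hoeffding plus union bound).} Conditional on $A$, the sample $R$ is a uniformly random size-$a$ subset of $A$ and $R' = A \setminus R$. For a fixed $f' \in F'$ the summands $\omega(x) f'(x) \le \tfrac{S}{a}\,\sens(x)\,\overline{f'}$ are bounded, and Hoeffding's inequality for sampling without replacement gives
\begin{align*}
\Pr\bigl[\,|\tilde{f'}_R - \tilde{f'}_{R'}| > (\eps/4) \overline{f'} \,\bigm|\, A\,\bigr] \ \le\ 2\, e^{-a\eps^2/(50 S^2)}.
\end{align*}
A union bound over $F'$ yields conditional failure probability at most $2 \cdot \tfrac{1}{8} e^{a\eps^2/(100 S^2)} \cdot e^{-a\eps^2/(50 S^2)} \le 1/4$, which combined with Step~1 gives the claimed $1/2$.

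The main obstacle is Step~2: the cover-code conditions are phrased through the auxiliary quantity $\hat f = \nu_A(f/s)$ rather than directly through $\overline f$, so transferring a per-point approximation of $f$ by $f' \in F'$ into a bound on $|\tilde f_R - \tilde f_{R'}|$ requires careful bookkeeping of the ratios $\overline f/\hat f$ and $\overline{f'}/\hat{f'}$, using condition (i) to ensure one-sided dominance and exploiting that under Step~1 one may effectively condition on $\hat f \approx \overline f / S$. Once that is carried out, Step~3 is a direct Hoeffding calculation whose constants are calibrated precisely so that the cover-code size bound kills the union-bound factor.
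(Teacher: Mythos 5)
The paper does not prove this theorem itself --- it imports it verbatim from Langberg and Schulman~\cite{LS10} (their Theorem 4.4), remarking only that the proof is a double-sampling argument in the style of $\eps$-net proofs. Your sketch reconstructs essentially that same argument (Chebyshev via Lemma~\ref{lem:variance} plus symmetrization, reduction to the cover code conditional on $A = R \cup R'$, then Hoeffding for the random split and a union bound over the cover), and the bookkeeping you defer in Step~2 does go through: the slack term is controlled because $\hat f \le \overline f$ holds deterministically (since $f(x) \le \sens(x)\,\overline f \le s(x)\,\overline f$), and condition (i) gives exactly $(\hat f/\hat{f'})\,\overline{f'} \le \overline f$, so no conditioning on $\hat f \approx \overline f / S$ is actually needed.
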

For $(k,C)$-median clustering  of polygonal curves and point sets under the (continuous and discrete) Fr\'{e}chet and Hausdorff distances respectively, if $C$ is defined by curves/point sets having at most $l$ vertices/points each, we know that $C$ is $h$-well behaved for $h(n) = (cnm/d)^{dl}$ (Lemmas~\ref{lem:cfd_well_behaved}, \ref{lem:dfd_well_behaved}, and \ref{lem:hd_well_behaved}) for some constant $c > 0$. Plugging this $h$ in Theorem~\ref{thm:cov} to get the size of the $\eps$-cover-code for $F_{k,C}$, we can show using some elementary algebra that $a = \Theta\left(\frac{klS^2m^{\delta}d}{\eps^2}\log\left( \frac{klS}{\eps}\right)\right)$ satisfies the requirements of Theorem~\ref{thm:cover_coreset} for any $\delta > 0$. Using the value of $S$ from Lemma~\ref{lem:sensitivity_metric_space}, we get the following result.

\begin{theorem}
\label{thm:coresets_existence}
For any instance of the $(k,C)$-median problem for  the metric spaces $(T_{poly}, \dist_F)$, $(T_{poly}, \dist_{dF})$, and $(U, \dist_H)$ with $C = T^l, U^l$ respectively, the algorithm in Fig.~\ref{fig:alg} computes a coreset (and the corresponding weights) of size $\Theta\left(\frac{k^3lm^{\delta}d}{\eps^2}\log\left( \frac{kl}{\eps}\right)\right)$ for any $\delta > 0$, with probability $\ge 1/2$; the constant inside $\Theta (\cdot)$ depends on $\delta$.
\end{theorem}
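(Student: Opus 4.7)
}
The plan is to verify that the algorithm in Fig.~\ref{fig:alg} is an instantiation of the sensitivity sampling scheme whose correctness is guaranteed by Theorem~\ref{thm:cover_coreset}, and then to check that the chosen sample size $a$ actually satisfies the hypotheses of that theorem. First, I would observe that the sensitivity-estimation block of the algorithm is literally the construction from the proof of Lemma~\ref{lem:bicriteria}, applied to an $(\alpha,\beta)$-bicriteria approximation with constants $\alpha,\beta \ge 1$ (which is assumed available; any of the known constant-factor bicriteria algorithms for the $(k,C)$-median problem on curves/point sets under Fr\'{e}chet/Hausdorff suffices). This gives, for every $p \in P$, a valid upper bound $s(p) \ge \sens(p)$ with total bound $S := \sum_{p \in P} s(p) \le 6\alpha + 4\beta k = O(k)$. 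Combined with Lemma~\ref{lem:sensitivity_metric_space}, which certifies that such an $S$ is feasible, this pins down both $q$ and the weights $\omega(p) = S/(a \cdot s(p))$ in terms of quantities the algorithm computes.

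Next, I would pull in the well-behaved property: by Lemmas~\ref{lem:cfd_well_behaved}, \ref{lem:dfd_well_behaved}, and \ref{lem:hd_well_behaved}, each of the three metric spaces in question is $h$-well-behaved with respect to $C$ for $h(n) = (cnm/d)^{dl}$. Plugging this $h$ into Theorem~\ref{thm:cov}, for any finite $A \subseteq X$ with $|A|=2a$ the function family $F_{k,C}$ admits an $\eps$-cover-code of size at most
\begin{align*}
N(a) \;=\; \bigl( h\bigl((Sa/\eps)^{\Theta(1)} \bigr) \bigr)^{k} \;=\; \Bigl( \bigl( c\, (Sa/\eps)^{\Theta(1)}\, m/d \bigr)^{dl} \Bigr)^{k}.
\end{align*}
Taking logarithms, $\log N(a) = O\bigl(k d l\,(\log(Sa/\eps) + \log(m/d))\bigr)$.

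To apply Theorem~\ref{thm:cover_coreset}, I then need $a \ge 8(S-1)/\eps^2$ (which is immediate since $S = O(k)$ and $a$ already has a $k^3/\eps^2$ factor), together with
\begin{align*}
\log N(a) \;\le\; \frac{a \eps^2}{100 S^2} - \log 8 .
\end{align*}
Substituting $S = O(k)$ and the target
$a = \Theta\!\bigl(\tfrac{k^{3} l m^{\delta} d}{\eps^{2}} \log(kl/\eps) \bigr)$,
the right-hand side is $\Theta\!\bigl( \tfrac{k d l m^{\delta}}{1}\log(kl/\eps)\bigr)$, while the left-hand side is
$O\!\bigl( k d l\,(\log(kl/\eps)+\delta \log m+\log(m/d))\bigr)$.
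Here comes the main technical step: I have to absorb the additive $\log m$ term from the cover-code bound into the multiplicative $m^\delta$ factor on the right. This uses the elementary fact that for any fixed $\delta > 0$ we have $m^{\delta} = \omega(\log m)$, so there is a constant (depending on $\delta$) for which the inequality holds for all $m$. That is precisely why the theorem carries the $m^{\delta}$ factor with a $\delta$-dependent hidden constant.

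Once the hypothesis of Theorem~\ref{thm:cover_coreset} is established, the theorem directly yields that with probability $\ge 1/2$ the weighted sample $(P',\omega)$ returned by the algorithm is an $\eps$-approximator for $F_{k,C}$, which by the definition of $F_{k,C}$ and $\cost(\cdot,\cdot)$ is exactly the statement that $(P',\omega)$ is an $\eps$-coreset for the $(k,C)$-median problem on $P$. I expect the main obstacle to be not any single deep idea but rather the algebraic bookkeeping in the last display: one has to verify that the $k^3$, not merely $k$, in the sample size comes from the $S^2$ in the denominator of the cover-code exponent in Theorem~\ref{thm:cover_coreset}, and that the $m^\delta$ factor is both necessary (to swallow $\log m$) and sufficient (to dominate it with a $\delta$-dependent constant), so that a single choice of $a$ works simultaneously for all three metrics.
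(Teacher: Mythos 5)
Your proposal is correct and follows the same route as the paper: combine the bicriteria sensitivity bound ($S = O(k)$), the $h$-well-behaved lemmas, Theorem~\ref{thm:cov} for the cover-code size, and Theorem~\ref{thm:cover_coreset}, then verify the sample-size inequality. In fact you spell out the "elementary algebra" the paper leaves implicit, and your accounting of where the $k^3$ comes from (one $k$ from the exponent of the cover code, $k^2$ from the $S^2$ in the denominator of $a\eps^2/100S^2$) and of the role of $m^\delta$ in absorbing the additive $\log m$ is exactly right.
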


\subparagraph{Running time.} We briefly remark on the running time of the algorithm in Fig.~\ref{fig:alg}.

Apart from the time spent in the $(\alpha, \beta)$-approximation, computing the sensitivity upper bounds requires computing the distance between the $n$ input curves/point sets of $P$ (with at most $m$ points each) and the $\beta k$ centers of $C'$ (having at most $l$ points each). Hence there are $\beta n k$ distance computations, which take time $O(ml)$ each for $\dist_{dF}$ and $\dist_H$, and $O(ml \log ml)$ each for $\dist_F$~\cite{AG95}.

For $\dist_{dF}$ and $\dist_{H}$, a $\left(\frac{4}{3},1\right)$-approximation can be computed in times $O\left(nm \log m \log (m/l) \cdot 2^{O(k \log k)} \cdot l^{O(kdl)}\right)$ and $O\left(nm \cdot 2^{O(k \log k)} \cdot l^{O(kdl)}\right)$ respectively~\cite{NT20}.

For $\dist_F$, the algorithm in~\cite{BDR20} can be used to return a $(1.158,1)$-approximation in time $O\left(n \cdot \poly(m)\cdot2^{O(kdl)}\right)$. However, the centers returned can have $2l-2$ vertices and cannot be used directly to compute sensitivities. Using a 4-approximate minimum-error $l$-simplification on these center curves gives us curves with $l$ vertices, while only incurring a constant factor loss in the approximation factor (see, e.g.,~\cite[Theorem 3.1]{BDR20}). The simplification algorithm takes time $O(l^3 \log l)$ per curve~\cite[Lemma 7.1]{BDGHKLS19}. Combining all of these, we get the following.

\begin{corollary}
\label{thm:coresets_running_time}
The algorithm in Fig.~\ref{fig:alg} can be implemented in time $O\left(n \cdot \poly(m)\cdot2^{O(kdl)} + \frac{k^3lm^{\delta}d}{\eps^2}\log\left( \frac{kl}{\eps}\right) \right)$ for $\dist_F$; time $O\left( nm \log m \log (m/l) \cdot 2^{O(k \log k)} \cdot l^{O(kdl)} + \frac{k^3lm^{\delta}d}{\eps^2}\log\left( \frac{kl}{\eps}\right)\right)$ for $\dist_{dF}$; and time \\
$O\left( nm \cdot 2^{O(k \log k)} \cdot l^{O(kdl)} + \frac{k^3lm^{\delta}d}{\eps^2}\log\left( \frac{kl}{\eps}\right)\right)$ for $\dist_{H}$. The running times hold for any $\delta > 0$.
\end{corollary}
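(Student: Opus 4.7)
The plan is to account for the cost of each step in Fig.~\ref{fig:alg} separately and observe that the bicriteria approximation and the subsequent $O(nk)$ distance computations dominate the combinatorial work, while the sampling stage contributes only an additive term of size $a$.

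First I would bound the bicriteria step. For $\dist_{dF}$ and $\dist_H$, invoke the $(4/3,1)$-approximation of~\cite{NT20}, which runs in $O(nm \log m \log(m/l) \cdot 2^{O(k\log k)} \cdot l^{O(kdl)})$ and $O(nm \cdot 2^{O(k\log k)} \cdot l^{O(kdl)})$ time respectively. For $\dist_F$, invoke the $(1.158,1)$-approximation of~\cite{BDR20} in $O(n \cdot \poly(m) \cdot 2^{O(kdl)})$ time; since this returns center curves with up to $2l-2$ vertices and our target class is $T^l_{poly}$, I would post-process each center by the $4$-approximate minimum-error $l$-simplification of~\cite{BDGHKLS19} at $O(l^3 \log l)$ per center. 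By~\cite[Theorem 3.1]{BDR20} this incurs only a constant blow-up in the clustering cost, so the resulting $k$ curves in $T^l_{poly}$ still form an $(O(1),1)$-bicriteria solution of the form required by Lemma~\ref{lem:bicriteria}, and the extra $O(k l^3 \log l)$ term is absorbed into the bicriteria bound.

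Second I would bound the sensitivity-upper-bound computation. Forming $\opt'$ and the values $m_i$ requires the distance from each of the $n$ inputs to each of the $\beta k = O(k)$ centers in $C'$, a total of $O(nk)$ distance evaluations. By~\cite{AG95} these cost $O(ml)$ each for $\dist_{dF}$ and $\dist_H$ and $O(ml \log ml)$ each for $\dist_F$, so the contribution $O(nkml \log ml)$ is absorbed by $\poly(m) \cdot 2^{O(kdl)}$ and $l^{O(kdl)}$ respectively. Given these distances, a single pass over $P$ produces every $s(p)$ in $O(n)$ extra time.

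Third I would account for the sampling step: normalising $s$ into the distribution $q$ takes $O(n)$ time, and after building an alias table in $O(n)$ time the $a = \Theta\!\left(\frac{k^3 l m^{\delta} d}{\eps^2} \log(kl/\eps)\right)$ samples and their weights can be produced in $O(a)$ time. Adding this additive $a$ to the dominant bicriteria term for each metric yields the three running-time bounds claimed. The only delicate point, and what I expect to be the main obstacle, is the continuous-Fréchet case: the $2l-2$-vertex centers returned by~\cite{BDR20} do not by themselves give an instance of $(k,T^l_{poly})$-median to which Lemma~\ref{lem:bicriteria} applies, and the simplification argument above is what closes that gap without worsening the running time.
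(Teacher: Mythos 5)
Your proposal matches the paper's own argument essentially step for step: the same bicriteria algorithms from~\cite{NT20} and~\cite{BDR20}, the same $O(nk)$ distance evaluations at $O(ml)$ or $O(ml\log ml)$ each via~\cite{AG95}, and the same fix for the continuous Fr\'{e}chet case via the $4$-approximate minimum-error $l$-simplification of~\cite{BDGHKLS19} with only a constant-factor loss per~\cite[Theorem 3.1]{BDR20}. The proposal is correct and, if anything, slightly more explicit than the paper about the cost of the sampling stage itself.
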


Note that these running times are illustrative only, and can be improved further by using faster bicriteria approximations.

\section{Lower bound on size of the coreset}
\label{sec:lower_bound}

We prove a lower bound on the size of a coreset, provided that the coreset is restricted to be a subset of the input points.

\begin{theorem}
\label{thm:lower_bound}
For any $c > 0$ and $\eps \in (0,1)$, there exists a set $P$ of polygonal curves (resp. finite point sets) in $\reals^2$ such that any $\eps$-coreset for the $k$-median problem on $P$ under the discrete or continuous Fr\'{e}chet (resp. Hausdorff) distance must have size $ \ge \frac{ck}{\eps}$, under the restriction that the coreset must be a subset of $P$.
\end{theorem}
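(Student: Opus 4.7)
The plan is to reduce all three lower bounds (discrete Fr\'echet, continuous Fr\'echet, Hausdorff) to a single Euclidean statement in $\reals^2$. Take every input object in $P$ to be a single-vertex polygonal curve, respectively a singleton point set; the three distance functions $\dist_F,\dist_{dF},\dist_H$ then collapse to the Euclidean distance in $\reals^2$, both between inputs and between inputs and candidate centers (we may restrict the candidate-center space the same way, since a coreset lower bound for a restricted center space is also a lower bound for any larger one). It therefore suffices to exhibit a Euclidean $k$-median instance in $\reals^2$ of size $\Theta(ck/\eps)$ whose subset coresets must have size $\ge ck/\eps$.

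Fix anchors $a_1,\dots,a_k\in\reals^2$ at pairwise distance $D$, with $D$ taken very large. Around each anchor $a_i$ place $N:=\lceil 4c/\eps\rceil$ input points $O_{i,1},\dots,O_{i,N}$ in a small disc of radius $O(\eps)$, giving $|P|=kN=\Theta(ck/\eps)$. The adversarial queries are $C_{i,j}=\{a_{i'}:i'\ne i\}\cup\{c_{i,j}\}$, where the probe center $c_{i,j}$ is placed near $a_i$ and tailored to the input $O_{i,j}$. By taking $D$ large enough compared to the cluster diameters, every input $O_{i',j'}$ with $i'\ne i$ is assigned to $a_{i'}$ in both the true and any admissible coreset cost, with a contribution independent of $j$; so the coreset constraint for $C_{i,j}$ reduces, up to an additive $o(\eps)$ error that $D$ can make arbitrarily small, to a per-cluster constraint on the weights the coreset places inside cluster $i$.

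The remaining work is a per-cluster rigidity statement: any $(1\pm\eps)$-approximate coreset must contain at least $c/\eps$ distinct points from each $\{O_{i,j}\}_{j=1}^N$; a pigeonhole over the $k$ independent clusters then yields $|P'|\ge ck/\eps$. A concrete way to arrange the per-cluster rigidity is to place the $O_{i,j}$ along a short segment issuing from $a_i$ at rapidly (e.g., exponentially) growing spacings and set $c_{i,j}=O_{i,j}$, so that the $N\times N$ distance matrix $M$ with entries $M_{jj'}=\dist(c_{i,j},O_{i,j'})$ is triangular-dominant: row $j$ is sharply minimized at column $j$, and the gap to the next smallest column grows geometrically. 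Omitting any column $j^\star$ then creates a rigid discrepancy in row $j^\star$ that no nonnegative reweighting of the remaining columns can absorb while simultaneously respecting the $(1\pm\eps)$ window on all other rows. The main obstacle is ruling out such reweighting tricks rigorously; I would do this by solving the $N$ coreset $(1\pm\eps)$-inequalities top-down through the triangular structure, showing that each missing input forces an $\Omega(\eps)$ multiplicative slack at its own row, so that having fewer than $c/\eps$ coreset points in the cluster cumulatively violates at least one constraint.
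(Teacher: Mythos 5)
Your opening reduction is fine: taking every input to be a single-vertex curve (resp.\ singleton set) collapses all three distances to the Euclidean one, and since a coreset for a larger center space is a fortiori a coreset for a restricted one, a lower bound for single-point centers suffices. The trouble is that both of the steps that actually carry the lower bound are left unproven, and each has a concrete problem. First, the reduction from the global $(1\pm\eps)$ guarantee to a per-cluster constraint does not go through as claimed. Every probe $C_{i,j}$ must place a center near each anchor, so the true cost at $C_{i,j}$ is the \emph{sum} of $k$ per-cluster costs, each of order $N\eps$; the admissible additive error is $\eps$ times this total, i.e.\ about $k$ times larger than ``$\eps$ times the cluster-$i$ cost.'' Increasing $D$ does not help here: $D$ only controls the Voronoi assignment, while the other clusters' contribution $\sum_{i'\ne i}\sum_{j'}\norm{O_{i',j'}-a_{i'}}$ is fixed by the within-cluster radii and is not $o(\eps)$ relative to the cluster-$i$ cost. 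A coreset can therefore trade error between clusters, and without an additional argument you lose a factor of $k$ exactly where you need it.

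Second, the per-cluster rigidity (``any $(1\pm\eps)$-coreset must keep $\ge c/\eps$ points of each cluster'') is the entire content of the theorem and is only asserted; moreover the mechanism you describe is not the right one. If the coreset omits $O_{i,j^\star}$ and you probe at $c_{i,j^\star}=O_{i,j^\star}$, the omitted point contributes $M_{j^\star j^\star}=0$ to the true cost at that probe, so its absence creates no direct discrepancy ``in row $j^\star$''; with exponentially growing spacings the cost at every probe is in fact dominated by the farthest points, and whether a small weighted subset can $(1\pm\eps)$-track the resulting convex piecewise-linear cost function is precisely the question, not something the triangular structure settles. The paper's proof sidesteps both difficulties with a different device: a single near-uniform gadget ($n$ curves pairwise at distance $2$, all at distance $1$ from a reference curve $\traj_r$), and a comparison of two center sets of \emph{equal true cost} --- $C_1$ disjoint from the coreset and $C_2$ the $k$ heaviest coreset elements. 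An averaging argument shows the $k$ heaviest elements carry weight $W'>\eps W/c$ with $W\ge(1-\eps)n$, and absorbing that weight changes the coreset cost by $2W'$, more than the $(1+\eps)/(1-\eps)$ window permits. To salvage your route you would need to prove both missing lemmas (the per-cluster error isolation and the one-dimensional rigidity); as written, the proposal is a plan rather than a proof.
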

\begin{proof}
We first provide proof for the continuous Fr\'echet distance -- the construction below also extends to the discrete Fr\'{e}chet and Hausdorff distances in a straightforward manner.

\begin{figure}[h]
\centering
\includegraphics[width=0.6\textwidth]{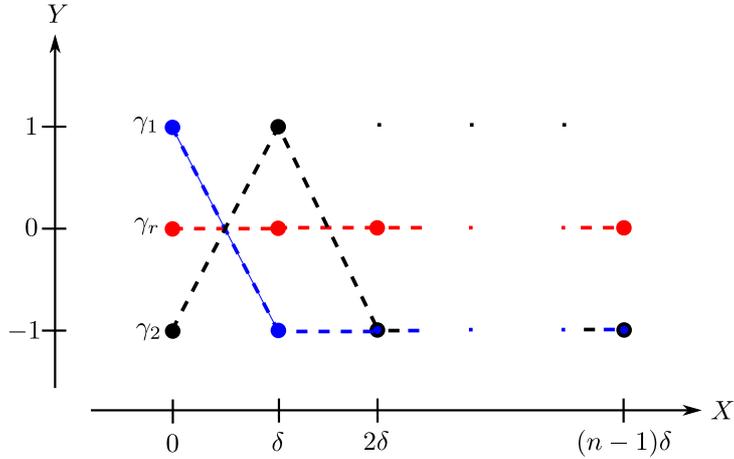}
\caption{Lower bound construction showing curves $\traj_r$ (red), $\traj_1$ (blue), and $\traj_2$ (black).}
\label{fig:lower_bound}
\end{figure}

Let $P = \{\traj_r, \traj_1, \ldots, \traj_n\}$ be a set of $n+1$ polygonal curves in $\reals^2$, for some $n > 0$ to be determined later. Let $\traj_r = \langle (0,0), (\delta,0), \ldots, ((n-1)\delta, 0)\rangle$. For $i = 1, \ldots, n$, define curve $\traj_i = \langle p_{i1}, \ldots, p_{in} \rangle$, where
\begin{align*}
p_{ij} = 
\begin{cases}
((j-1)\delta, -1) \text{ if } j \ne i. \\
((j-1)\delta, 1)  \text{ if } j = i.
\end{cases}
\end{align*}
If we take $\delta > 0$ large enough, the optimal correspondence determining the Fr\'{e}chet distance between any two curves in $P$ will match the $j$-th vertex of one with the $j$-th vertex of the other. Hence, it is clear that $\dist_F(\traj_r, \traj_i) = 1$ for all $i \in \{1, \ldots, n\}$, and $\dist_F(\traj_i, \traj_j) = 2$ for all $i, j \in \{1, \ldots, n\}, i \neq j$. See Fig.~\ref{fig:lower_bound}.

Suppose, to the contrary, that $P' \subseteq P$ is a coreset (with weight $w$) of size $< \frac{ck}{\eps}$ for some $c > 0$ and $\eps \in (0,1)$. Define $W = \sum_{x \in P', x \ne \traj_r} w(x) = \sum_{x \in P'} w(x) \dist_F(x, \traj_r)$; the last inequality stems from the construction and distance values.

Consider a $k$-median solution where all centers lie on $\traj_r$. We then have
\begin{align*}
W = \sum_{x \in P'} w(x) \dist_F(x, \traj_r) \ge (1-\eps) \cost (P, \traj_r) = (1-\eps) n,
\end{align*}
where the inequality is because $P'$ is an $\eps$-coreset. Let $C_1 \subseteq P \setminus (P' \cup \{\traj_r\})$ be of size $k$, and let $C_2$ be the $k$ curves in $P' \setminus \{\traj_r\}$ with largest weight. By an averaging argument we have
\begin{align*}
W' = \sum_{x \in C_2} w(x) \ge \frac{kW}{|P'|} > \frac{\eps W}{c} \ge \frac{\eps(1-\eps)n}{c}.
\end{align*}
Further,
\begin{align*}
\cost(P', C_1) &= \sum_{x \in P'} w(x) \dist_F(x, C_1) =  w(\traj_r) + 2W \le (1+\eps)\cost(P,C_1) \le 2(2n+1).\\
\cost(P', C_2) &= \sum_{x \in P'} w(x) \dist_F(x, C_2) = w(\traj_r) + 2(W-W') = \cost(P',C_1) - 2W'.
\end{align*}
Also, since $P'$ is an $\eps$-coreset, and $\cost(P, C_1) = \cost(P, C_2)$ by construction and our choice of $C_1$ and $C_2$, we have
\begin{align*}
\frac{\cost(P', C_1)}{\cost(P', C_2)} \le \frac{(1+\eps)\cost(P,C_1)}{(1-\eps)\cost(P,C_2)} = \frac{1+\eps}{1-\eps}.
\end{align*}
On the other hand,
\begin{align*}
\frac{\cost(P', C_1)}{\cost(P', C_2)}  = \frac{\cost(P', C_1)}{\cost(P', C_1) -2W'} > \frac{2(2n+1)}{2(2n+1) - \eps(1-\eps)n/c} > \frac{1+\eps}{1-\eps}
\end{align*}
for large enough $n$, which is a contradiction.
\end{proof}
\section{Conclusion}
\label{sec:conclusion}
We give the first coresets for $k$-median clustering polygonal curves and finite point sets under the Fr\'{e}chet and Hausdorff distances, whose sizes are independent of the number of input objects, when the cluster centers are restricted to have bounded complexity. In doing so, we precisely characterize a general condition on the restricted space of cluster centers that allows use to get small coresets. We also give a lower bound on the size of the coresets, when the coreset must be a subset of the input.

There are several interesting open problems. Can we extend our work to piecewise smooth curves (e.g., algebraic curves in the plane)? Can we use the work of Feldman and Langberg~\cite{FL11} to compute smaller coresets for our setting? As already mentioned, this requires a bound on the shattering dimension of range spaces induced by weighted distance functions. Can this be sidestepped using the probabilistic shattering dimension of Huang \etal~\cite{HJLW18}? As far as lower bounds are concerned, can we give tighter lower bounds ? The lower bound of $\Omega(k/\eps)$ in this paper uses a very general construction, and does not depend on $m,l$. We hope our work will spur further research in these directions.

\bibliography{ref}

\section{Appendix}
\subsection{Proof of Theorem~\ref{thm:cov}}
\label{sec:proof_cover}
We now show the existence of $G$ -- this will also prove Theorem~\ref{thm:cov}.

Let $Z = \sum_A 1/s(x)$, and $z(x) = 1/(s(x)Z)$ -- hence $z$ is a probability distribution on $A$.
Let $j = \arg \min_{f \in F_{k,C}} \hat{f}$, and let $\{v_1, \ldots, v_k\} \subseteq C$ be the associated centers of $j$. Let $\V_1, \ldots, \V_k$ be their corresponding Voronoi cells, i.e., for $x \in \V_i$, $j(x) = \dist(v_i, x)$. Observe that for any $f \in F_{k,C}$ and $y \in A$, we have
\begin{align}
\label{eq:7.5}
\hat{f} = \nu_A(f/s) \ge \frac{f(y)}{as(y)} \Rightarrow s(y) \ge \frac{f(y)}{a\hat{f}}.
\end{align}

For the following, we assume that $z(\V_i) > 0$ for all $i$ (we neglect all $i$ with $z(\V_i) = 0$). Define
\begin{align*}
j_i
= \frac{1}{az(\V_i)} \sum_{y \in \V_i \cap A} \frac{j(y)}{s(y)}
= \frac{1}{az(\V_i)} \sum_{y \in \V_i \cap A} \frac{\dist(v_i,y)}{s(y)}
= \frac{Z}{a}             \sum_{y \in \V_i \cap A} \frac{\dist(v_i,y)z(y)}{z(\V_i)}.
\end{align*}
This means that $\hat{j} = \sum_{i=1}^k z(\V_i)j_i$. Note that the average value (using distribution $z$) of $j$ on $\V_i$ is
\begin{align*}
\frac{1}{z(\V_i)} \sum_{y \in \V_i \cap A} \dist(v_i,y) z(y) = \frac{aj_i}{Z}.
\end{align*}
Hence by Markov's inequality, at least half of $\V_i \cap A$ (according to the distribution $z(\V_i)$) lies in $B(v_i, 2aj_i/Z)$.

Let $s_i^{\min} = \min_{y \in \V_i \cap A} s(y)$. Then
\begin{align}
\label{eq:7.6}
z(\V_i) = \sum_{\V_i \cap A} \frac{1}{s(x)Z} \ge \frac{1}{Zs_i^{\min}}
\end{align}

Let $f$ be any function in $F$, with centers $\{u_1, \ldots, u_k\}$. Let $\dist_i$ be the distance of $v_i$ to its nearest center of $f$. The next lemma bounds $\hat{f}$ from above and below.
\begin{lemma}
\label{lem:bound_hat_f}
$\hat{j} + \sum_{i=1}^k \dist_i / s_i^{\min} \ge \hat{f} \ge \frac{1}{4a} \sum_{i=1}^k \dist_i / s_i^{\min}$,  and hence
$\forall i, \hat{f} \ge\frac{\dist_i}{4a s_i^{\min}}$.
\end{lemma}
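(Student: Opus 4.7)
The plan is to attack the two bounds separately. The upper bound $\hat f \le \hat j + \sum_i \dist_i/s_i^{\min}$ is the easy half: I would let $u_{\pi(i)}$ denote the center of $f$ nearest to $v_i$, so that $\dist(v_i, u_{\pi(i)}) = \dist_i$. Then for every $x \in \V_i \cap A$ the triangle inequality gives $f(x) \le \dist(x, u_{\pi(i)}) \le \dist(x, v_i) + \dist_i = j(x) + \dist_i$. Dividing by $s(x)$, using the trivial estimate $\sum_{x \in \V_i \cap A} 1/s(x) \le |\V_i \cap A|/s_i^{\min} \le a/s_i^{\min}$, and summing over $i$ yields the bound after dividing by $a$. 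The final ``hence'' statement $\hat f \ge \dist_i/(4a\, s_i^{\min})$ for each fixed $i$ is immediate from the summed lower bound, since every summand on the right is non-negative.

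The lower bound $\hat f \ge \frac{1}{4a}\sum_i \dist_i/s_i^{\min}$ is where the real work lies. My plan is to exploit the Markov-type ball concentration recorded just before the lemma: the subset $A_i := \V_i \cap A \cap B(v_i, 2aj_i/Z)$ carries at least half of the $z$-mass of $\V_i \cap A$, so $\sum_{x \in A_i} 1/s(x) = Z \cdot z(A_i) \ge Z z(\V_i)/2$. For $x \in A_i$ the reverse triangle inequality gives $f(x) \ge \dist_i - \dist(x, v_i) \ge \dist_i - 2aj_i/Z$, hence the $i$-th contribution to $\hat f$ is at least $\frac{Z z(\V_i)}{2a}\max\{0,\, \dist_i - 2aj_i/Z\}$.

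I would then split the indices into a large case $\dist_i \ge 4 a j_i/Z$ and a small case $\dist_i < 4 a j_i/Z$. In the large case, $\max\{0,\, \dist_i - 2aj_i/Z\} \ge \dist_i/2$, and combined with $z(\V_i) \ge 1/(Z\, s_i^{\min})$ from~(\ref{eq:7.6}), the $i$-th contribution to $\hat f$ is already at least a constant multiple of $\dist_i/(a\, s_i^{\min})$. In the small case the ball concentration is too weak to produce a useful lower bound directly on $f$; instead I would observe that $\dist_i/(a\, s_i^{\min})$ is bounded by a constant multiple of $j_i/(Z\, s_i^{\min}) \le j_i z(\V_i)$, and summing these estimates over such $i$ gives a quantity bounded by $\hat j$, which in turn is $\le \hat f$ by the defining minimality of $j$ over $F_{k,C}$. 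Adding the contributions from the two cases yields the claimed inequality (up to an absolute constant in $1/(4a)$, which one can tighten by calibrating the case-split threshold).

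The main obstacle, and the only genuinely non-routine step, is the small case: when $\dist_i$ is only comparable to, rather than much larger than, the effective radius $a j_i/Z$, no positive lower bound on $f$ can be extracted from the $A_i$-concentration alone, and one must reroute the argument through the optimality of $j$. Everything else reduces to bookkeeping with the Voronoi partition, the triangle inequality, the identity $\hat j = \sum_i z(\V_i) j_i$, and the bound~(\ref{eq:7.6}).
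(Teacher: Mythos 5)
Your proof is correct in substance and follows the paper's argument almost step for step: the upper bound via the triangle inequality and $|\V_i\cap A|\le a$ is identical, and the lower bound uses the same two ingredients (the Markov concentration of $z$-mass in $B(v_i,2aj_i/Z)$ together with the reverse triangle inequality, and the minimality $\hat{j}\le\hat{f}$). The one organizational difference is in how the term $2aj_i/Z$ is absorbed. The paper does not split into cases: it lower-bounds the contribution of \emph{every} index by the possibly negative quantity $\frac{Z}{2a}(\dist_i-2aj_i/Z)\,z(\V_i)$, sums, recognizes $\sum_i j_i z(\V_i)=\hat{j}$, and then uses $\hat{j}\le\hat{f}$ once to move that term to the left, giving $2\hat{f}\ge\frac{1}{2a}\sum_i\dist_i/s_i^{\min}$ and hence exactly the stated constant $\frac{1}{4a}$. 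Your case split ($\dist_i$ large versus small relative to $aj_i/Z$) proves the same inequality but, as far as I can tell, cannot be calibrated to recover $\frac{1}{4a}$: with threshold $\dist_i\ge 2caj_i/Z$ the large indices give $\sum_{\mathrm{large}}\dist_i/(as_i^{\min})\le\frac{2c}{c-1}\hat{f}$ and the small ones give $\sum_{\mathrm{small}}\dist_i/(as_i^{\min})\le 2c\,\hat{j}\le 2c\,\hat{f}$, and the sum $\frac{2c}{c-1}+2c$ is minimized at $c=2$ with value $8$, i.e.\ you get $\hat{f}\ge\frac{1}{8a}\sum_i\dist_i/s_i^{\min}$. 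This is harmless downstream (the lemma is only used through $\dist_i=O(a\,s_i^{\min}\hat{f})$, and the constants $p_2,p_4$ absorb the factor of two), but if you want the constant as stated you should drop the case split and telescope the signed terms as above.
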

\begin{proof}
For a point $x \in A$, let $u_x$ be its closest center in $\{u_1, \ldots, u_k\}$.

Suppose that for all $i$ we have $\dist_i \ge 2aj_i/Z$. Let $B_i = \V_i \cap B(v_i, 2aj_i/Z) \cap A$. Then,
\begin{align*}
\hat{f}
&\ge \frac{1}{a} \sum_i \sum_{x \in B_i} \frac{f(x)}{s(x)} \\
&= \frac{1}{a} \sum_i \sum_{x \in B_i} \frac{\dist(x,u_x)}{s(x)} \\
&\ge \frac{1}{a} \sum_i \sum_{x \in B_i} \frac{\abs{\dist_i - 2aj_i/Z}}{s(x)} \\
&= \frac{Z}{a} \sum_i \sum_{x \in B_i} \abs{\dist_i - 2aj_i/Z} z(x) \\
&= \frac{Z}{a}  \sum_i \abs{\dist_i - 2aj_i/Z} \sum_{x \in B_i} z(x) \\
& \ge \frac{Z}{2a} \sum_i \abs{\dist_i - 2aj_i/Z}  z(\V_i) &\text{ (by Markov's inequality)} \\
& \ge \frac{Z}{2a} \sum_i \dist_i z(\V_i) -\sum_i j_i z(\V_i) \\
& \ge \frac{1}{2a} \sum_i \frac{\dist_i}{s_i^{\min}} - \sum_i j_i z(\V_i) &\text{ (by Equation~\ref{eq:7.6})} \\
&= \frac{1}{2a}\sum_i \frac{\dist_i}{s_i^{\min}} - \hat{j}.
\end{align*}
If for some $i$, it is the case that $\dist_i \le 2aj_i/Z$, then the inequalities above still hold since $\dist_i \le 2aj_i/Z \Rightarrow \dist_i - 2aj_i/Z \le 0 \le \dist(x,u_x)$.

Since $j$ was chosen so as to minimize $\hat{f}$ over all $f \in F_{k,C}$, we have $\hat{f} \ge \hat{j}$ which then gives $\hat{f} \ge \frac{1}{4a} \sum_i \dist_i / s_i^{\min}$.

For the upper bound, we have
\begin{align*}
\hat{f} 
&= \frac{1}{a} \sum_{x \in A} \frac{f(x)}{s(x)} = \frac{1}{a} \sum_{x \in A} \frac{\dist(x,u_x)}{s(x)} \\
&\le \frac{1}{a} \sum_i \sum_{x \in \V_i \cap A} \frac{\dist(x,v_i) + \dist_i}{s(x)} &\text{ (by triangle inequality)} \\
&\le  \frac{1}{a} \sum_i \sum_{x \in \V_i \cap A} \frac{\dist(x,v_i)}{s(x)} + 
 \frac{1}{a} \sum_i \sum_{x \in \V_i \cap A} \frac{\dist_i}{s(x)} \\
&\le \hat{j} + \sum_i \sum_{x \in \V_i \cap A} \frac{\dist_i}{as_i^{\min}} \le \hat{j} + \sum_i \sum_{x \in \V_i \cap A} \frac{\dist_i}{s_i^{\min}}.
\end{align*}
\end{proof}
We will now define a set of points in $X$ that will act as potential centers for functions $g \in G$. In what follows, we will use some parameters $p_1, p_2, p_3, p_4$ to be defined at the end of the proof. For each point $x \in A$ and for $i = 1,\ldots,(ap_1)^2$, let 
\begin{align*}
R_{x,i} 
&= \left\{ v \in X \mid \frac{\dist(x,v)}{s(x)} \in \left[ \frac{(i-1)\hat{j}}{ap_1}, \frac{i\hat{j}}{ap_1} \right)\right\} \\
&= B\left(x, \frac{i \hat{j} s(x)}{ap_1}\right) \setminus B\left(x, \frac{(i-1) \hat{j} s(x)}{ap_1}\right).
\end{align*}

Let $\EuScript{R} =\{ R_{x,i} \mid x \in A, i = 1, \ldots, (ap_1)^2\}$. Let $N \subseteq C$ have one element in each equivalence class of $\eqclass_{\EuScript{R}}^X$. Since $\X$ is $h$-well-behaved w.r.t. $C$, we have $|N| = h(|\EuScript{R}|) = h(a^3p_1^2)$.

The following lemma follows directly from the definition of $\EuScript{R}$.
\begin{lemma}
\label{lem:rep_eqclass}
Let $f \in F_{k,C}$, $E$ be an equivalence class in $\eqclass_{\EuScript{R}}^X$, and $n$ be the element of $N$ in $E$. Then, for any $v \in E$ and $x \in X$, we have
\begin{align*}
\abs{\dist(x,v) - \dist(x,n)} \le \frac{\hat{j}s(x)}{ap_1} \le \frac{\hat{f}s(x)}{ap_1}.
\end{align*}
\end{lemma}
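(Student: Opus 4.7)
The plan is to handle the two inequalities separately. The right-hand inequality, $\hat{j} s(x)/(ap_1) \le \hat{f} s(x)/(ap_1)$, is immediate from the very definition of $j$: since $j = \arg\min_{f' \in F_{k,C}} \hat{f'}$, one has $\hat{j} \le \hat{f}$ for every $f \in F_{k,C}$, and multiplying through by the nonnegative factor $s(x)/(ap_1)$ preserves the inequality.

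For the left-hand inequality I would fix the point $x$ (the intended case is $x \in A$, which is where $\EuScript{R}$ has members at all) and exploit the fact that $v \sim n$. By the construction of $\eqclass_{\EuScript{R}}^X$, $v$ and $n$ belong to exactly the same members of $\EuScript{R}$; in particular, for each $i \in \{1,\ldots,(ap_1)^2\}$, either both $v,n \in R_{x,i}$ or neither does. If some $i$ captures both, then by the defining identity
\begin{align*}
R_{x,i} = \left\{u \in X \;\middle|\; \frac{\dist(x,u)}{s(x)} \in \left[\frac{(i-1)\hat{j}}{ap_1}, \frac{i\hat{j}}{ap_1}\right)\right\},
\end{align*}
the two normalized distances $\dist(x,v)/s(x)$ and $\dist(x,n)/s(x)$ lie in a common half-open interval of length $\hat{j}/(ap_1)$, so their difference is at most $\hat{j}/(ap_1)$; multiplying by $s(x)$ produces $\abs{\dist(x,v) - \dist(x,n)} \le \hat{j} s(x)/(ap_1)$, as required.

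The only residual case is when neither $v$ nor $n$ belongs to any $R_{x,i}$. Since $\bigcup_{i=1}^{(ap_1)^2} R_{x,i} = B(x, ap_1 \hat{j} s(x)) \setminus \{x\}$, this would force either $v = n = x$ (where the bound is trivial) or both $v$ and $n$ to sit outside the outer ball $B(x, ap_1 \hat{j} s(x))$. The latter subcase is ruled out downstream, in the use of the lemma within Theorem~\ref{thm:cov}, by choosing $p_1$ large enough that all centers which actually realize $f(x) = \min_j \dist(x,u_j)$ for the relevant $f \in F_{k,C}$ fall inside $B(x, ap_1 \hat{j} s(x))$; the bound $s(y) \ge f(y)/(a\hat{f})$ from Equation~\ref{eq:7.5} together with the control on $\hat{f}$ from Lemma~\ref{lem:bound_hat_f} will constrain those centers to this ball. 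I expect the main obstacle to be precisely this parameter-juggling: $p_1$ must be large enough to guarantee covering of all active centers, yet not so large as to blow up $|N| = h(a^3 p_1^2)$ beyond the cover-code budget of Theorem~\ref{thm:cov}. Once $p_1$ is pinned down, the per-$x$ argument above is elementary and follows directly from the definitions.
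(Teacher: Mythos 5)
Your argument is the paper's intended one: the paper offers no proof beyond the remark that the lemma ``follows directly from the definition of $\R$,'' and the content of that remark is exactly your main case -- if $v$ and $n$ are $\sim$-equivalent and both lie in some $R_{x,i}$, their normalized distances to $x$ share a half-open interval of length $\hat{j}/(ap_1)$, and the right-hand inequality is just minimality of $\hat{j}$. So for the case the lemma is actually used in, your proof is correct and matches the paper.

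Your residual case is a genuine observation, not a defect of your write-up: as literally stated (``for any $v \in E$ and $x \in X$'') the lemma is false, since two points both lying outside every $R_{x,i}$ for a given $x$ (i.e., both at distance at least $ap_1\hat{j}s(x)$ from $x$) can be $\sim$-equivalent while having wildly different distances to $x$. The lemma is only meaningful for $x \in A$ (the sets $R_{x,i}$ are defined only for $x \in A$) and under the implicit hypothesis $\dist(x,v) < ap_1\hat{j}s(x)$; in Case 1 of the proof of Theorem~\ref{thm:cov} this is supplied by Equation~\ref{eq:7.5} together with $\hat{f} \le p_1\hat{j}$ for the center closest to $x$, exactly as you anticipate. (Be aware, though, that the paper also invokes the lemma for the pair $(u_{i'}, n_{i'})$, where $u_{i'}$ need not be the center of $f$ closest to $x$; that invocation requires a further small case analysis when $u_{i'}$ falls outside $\bigcup_i R_{x,i}$, which neither you nor the paper carries out, though the needed inequality does survive it.) Two minor corrections: $\bigcup_{i} R_{x,i}$ is the open ball $\{v : \dist(x,v) < ap_1\hat{j}s(x)\}$ and does contain $x$ (the first interval is $[0,\hat{j}/(ap_1))$), so your $v=n=x$ subcase does not arise separately; and no tension with the size of $N$ ensues, since $p_1$ is already fixed at $(Sa/\eps)^{\Theta(1)}$ and $|N| = h(a^3p_1^2)$ is budgeted for in the statement of Theorem~\ref{thm:cov}.
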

We will now define the function $g$ that approximates $f$. We split our analysis into two cases.

\textbf{Case 1:} $\hat{f} \le p_1 \hat{j}$. We assume without loss of generality that each center $u_i$ of $f$ is significant in that it is the closest center of $f$ to some point $x \in A$. Otherwise, we can set the insignificant centers of $f$ to one of the significant centers  -- this will not change the value of $f$ at all, and the \emph{new} $f$ can be used in the analysis below.

Consider a center $u_i \in C$ of $f$ and let $x$ be a point in $A$ for which $u_i$ is the closest center of $f$ to $x$. By Equation~\ref{eq:7.5} we have 
\begin{align*}
\frac{\dist(x,u_i)}{s(x)} = \frac{f(x)}{s(x)} \le a \hat{f} \le ap_1 \hat{h}.
\end{align*}
Thus, $u_i \in \cup_i R_{x,i}$, which implies that $u_i$ is in some equivalence class of $\eqclass_{\EuScript{R}}^X$; let $n_i$ be the representative of $N$ in this equivalence class. We define $g$ to be the function in $F_{k,C}$ with centers $n_1,\ldots, n_k$.

We now show that $g$ satisfies the requirements of Equation~\ref{eq:g_req} with $c_f = 1$. Let $x \in A$ and consider any center $u_i$ of $f$ and its corresponding center $n_i$ of $g$. As $u_i$ and $n_i$ are in the same equivalence class in $\eqclass_{\EuScript{R}}^X$, it holds by Lemma~\ref{lem:rep_eqclass} that $\abs{\dist(x,u_i) - \dist(x,n_i)} \le \hat{j}s(x)/(ap_1) \le \hat{f}s(x)/(ap_1)$.

To bound $\abs{f(x) - g(x)}$, suppose the closest centers of $f$ and $g$ to $x$ are $u_i$ and $n_{i'}$ respectively. Then,
\begin{align*}
\abs{f(x) - g(x)}
&=\abs{\dist(x,u_i) - \dist(x,n_{i'})} \\
&\le \abs{\dist(x,u_i) - \dist(x,n_{i})} + \abs{\dist(x,n_i) - \dist(x,n_{i'})} \\
&\le \hat{f}s(x)/(ap_1) + \abs{\dist(x,n_i) - \dist(x,n_{i'})}.
\end{align*}
Now, by Lemma~\ref{lem:rep_eqclass} and the fact that $n_{i'}$ (resp. $u_i$) is the closest center of $g$ (resp. $f$) to $x$, it holds that
\begin{align*}
\dist(x,n_i) - 2 \hat{f}s(x)/(ap_1)
&\le \dist(x,u_i) - \hat{f}s(x)/(ap_1) \\
&\le \dist(x,u_{i'}) - \hat{f}s(x)/(ap_1) \\
&\le \dist(x,n_{i'}) \le \dist(x,n_i).
\end{align*}
Thus, $ \abs{\dist(x,n_i) - \dist(x, n_{i'})} \le 2\hat{f}s(x)/(ap_1)$. Hence, we conclude that $\abs{f(x) - g(x)} \le 3\hat{f}s(x)/(ap_1)$.
Setting $ap_1 \ge 768S/\eps$, we obtain $\abs{f(x) - g(x)} \le \frac{\eps}{256S}s(x)\hat{f}$. The total number of possible functions $g$ in this case is at most $|N|^k = (h(a^3p_1^2))^k$.

\textbf{Case 2:} $\hat{f} \ge p_1 \hat{j}$. In this case, the function $g$ we construct will be constant on each region $\V_i$ (recall that $\V_i$ is the Voronoi cell of center $v_i$ of $j$), the constant value being one of $O(p_2 \log p_2)$ different values we specify shortly. Thus, the number of different functions $g$ in this case is at most  $O(p_2^k (\log p_2)^k)$.

We now define $g$. Let $p_3$ be a parameter. An index $i$ is said to be \emph{light} if $\frac{\dist_i}{s_i^{\min}} \le \hat{f}/p_3$; else it is said to be \emph{heavy} (recall that $\dist_i$ is the distance of $v_i$ to its nearest center of $f$, and $s_i^{\min} = \min_{y \in \V_i \cap A} s(y)$). For light indices $i$, we set $g = 0$ on $\V_i$.

Define $c_f = \max_{i : i \text{ is heavy}} \frac{\dist_i}{p_2}$. For heavy $i$, i.e., for $i$ in which $\frac{\dist_i}{s_i^{\min}} \ge \hat{f}/p_3$, we define $g$ on $\V_i$ to be the nearest value to $\dist_i / c_f$ in the set
\begin{align*}
\left\{ 0, \left(1 + \frac{1}{p_2}\right),\left(1 + \frac{1}{p_2}\right)^2, \ldots, p_2 \right\}.
\end{align*}
Thus, the number of different values that $g$ can take is $O(p_2 \log(p_2))$. Note that by definition, $\dist_i / c_f \le p_2$ for heavy $i$. Also, for heavy $i$ and $g$ defined on $\V_i$ as mentioned, $c_f g \in \left[ \dist_i\left(1-\frac{1}{p_2}\right), \dist_i\left(1 + \frac{1}{p_2}\right)\right]$. We now bound $\abs{f(x) - c_fg(x)}$ separately for light and heavy indices.

\textbf{Light indices :} Recall that for such indices $i$ we have $\dist_i / s_i^{\min} \le \hat{f} / p_3$, and $g(x) = 0$ for $x \in \V_i$. Then for $x \in \V_i$,
\begin{align*}
\abs{f(x) - c_fg(x)} &= \abs{f(x)} \le \dist(x,v_i) + \dist_i \\
&= j(x) + \dist_i \\
&\le a \hat{j}s(x) + s_i^{\min} \hat{f}/p_3 &\text{   (by Equation~\ref{eq:7.5})}\\
&\le a s(x) \hat{f}/p_1 + s(x)\hat{f}/p_3 \\
& = \left(\frac{a}{p_1} + \frac{1}{p_3}\right)s(x)\hat{f}.
\end{align*}
By setting $p_1, p_2$ later so that $\left(\frac{a}{p_1} + \frac{1}{p_3}\right)\le \frac{\eps}{256S}$, we get $\abs{f(x) - c_fg(x)} \le \frac{\eps}{256S}s(x)\hat{f}$.

\textbf{Heavy indices :} Recall that $\dist_i / s_i^{\min} \ge \hat{f}/p_3$ for heavy indices $i$. Here we again consider two sub-cases.

Firstly, consider $x \in \V_i$ for which $\dist(x,v_i) \le \dist_i / p_4$ ($p_4 \ge 1$ will be set later). Since the closest center of $f$ to $v_i$ is at distance $\dist_i$ from $v_i$, for such $x$ we have $f(x) \in [\dist_i(1-1/p_4), \dist_i(1+1/p_4)]$. Thus
\begin{align*}
\abs{f(x) - c_fg(x)}
&\le \abs{f(x) - \dist_i} + \frac{\dist_i}{p_2} &\text{   (using value of $g$ for heavy indices)}\\
&\le \abs{\dist_i(1+1/p_4) - \dist_i} + \frac{\dist_i}{p_2} \\
&\le \dist_i \left( \frac{1}{p_4} + \frac{1}{p_2} \right) \\
&\le 4as_i^{\min} \hat{f}\left( \frac{1}{p_4} + \frac{1}{p_2} \right). &\text{   (using Lemma~\ref{lem:bound_hat_f})}
\end{align*}
Later, we will set $p_2,p_4$ such that $\left(\frac{1}{p_4} + \frac{1}{p_2} \right) \le \frac{\eps}{1024Sa}$, which will imply that  $\abs{f(x) - c_fg(x)} \le \frac{\eps}{256S}s(x)\hat{f}$.

Next, consider $x \in \V_i$ for which $\dist(x,v_i) > \dist_i/p_4$. By triangle inequality, we have $f(x) \le \dist_i + \dist(x,v_i)$. Also, by Equation~\ref{eq:7.5}, $j(x) = \dist(x,v_i) \le as(x)\hat{j}$. Then,
\begin{align*}
\abs{f(x) - c_fg(x)}
&\le \abs{f(x) - \dist_i} + \frac{\dist_i}{p_2} &\text{   (using value of $g$ for heavy indices)}\\
&\le \dist(x,v_i) + \frac{\dist_i}{p_2} \\
&< \dist(x,v_i)\left(1 + \frac{p_4}{p_2} \right)\\
&\le as(x)\hat{j}\left(1 + \frac{p_4}{p_2} \right) \\
&\le \frac{as(x)\hat{f}}{p_1}\left(1 + \frac{p_4}{p_2} \right). &\text{   (by definition of $\hat{j}$, and since $p_1 \ge 1$)}
\end{align*}
We will set $p_1,p_2,p_4$ such that $\frac{a}{p_1}\left(1 + \frac{p_4}{p_2} \right) \le \frac{\eps}{256S}$. This gives $\abs{f(x) - c_fg(x)} \le \frac{\eps}{256S}s(x)\hat{f}$.

To summarize, the following values satisfy all the requirements stated above: $p_1 = \left(\frac{Sa}{\eps}\right)^{\Theta(1)}, p_2 = \frac{\alpha_1Sa}{\eps}, p_3 = \frac{\alpha_2Sa}{\eps}, p_4 = \frac{\alpha_3Sa}{\eps}$ for suitable constants $\alpha_1,\alpha_2,\alpha_3$. Thus, the total size of $G$ is
\begin{align*}
(h(a^3p_1^2))^k + O(p_2^k(\log p_2)^k) = \left(h\left(\left( \frac{Sa}{\eps}\right)^{\Theta(1)}\right)\right)^k.
\end{align*}

\subsection{Proof of Lemma~\ref{lem:sensitivity_metric_space}}
\label{sec:sensitivity_proof}
\begin{proof}
Let $\mu$ be a distribution on $X$. Further, let $\{c^*_1, c^*_2, \ldots, c^*_k\} \subseteq C$ be such that $f^* = f_{c^*_1, \ldots, c^*_k}$ minimizes
\begin{align*}
\overline{f} = \int_X f d\mu
\end{align*}
over all $f$. Let $\overline{f^*} = \opt$, $\mu_i = \mu(\V(c^*_i))$ and $m_i = \frac{1}{\mu_i} \int_{\V(c^*_i)} f d\mu$. Thus, $\opt = \sum_i \mu_i m_i$, and $m_i$ is the \emph{average} distance of a point in $\V(c^*_i)$ (under $\mu$) from $c^*_i$. Each $\mu_i$ is positive unless the support of $\mu$ is less than $k$ in which case the theorem is trivial.  Using Markov's inequality we then have
\begin{align*}
\mu(\ball(c^*_i, 2m_i) \cap \V(c^*_i)) \ge \mu_i/2.
\end{align*}

Let $c_1, \ldots c_k$ be any set of $k$ points in $C$, and let $f = f_{c_1,\ldots,c_k}$. Let $c$ denote a closest point to $c^*_i$ in $\{c_1, \ldots, c_k\}$, and let $\dist_i = \dist(c,c^*_i)$. Then using the preceding inequality and the triangle inequality we have
\begin{align*}
\overline{f}
\ge \int\limits_{\V(c^*_i)} f d\mu
\ge  \int\limits_{\ball(c^*_i, 2m_i) \cap \V(c^*_i)} f d\mu
\ge \max \{0, \dist_i - 2m_i\} \mu_i/2.
\end{align*}

Also, $\overline{f} \ge \opt$. Thus, for any $\alpha \in [0,1]$ we have
\begin{align*}
\overline{f} \ge \max\{0, \dist_i - 2m_i\} \alpha\mu_i/2 + (1-\alpha)\opt.
\end{align*}
The value of $\alpha$ will be set later. Let $x \in \V(c^*_i)$.  We now have
\begin{align*}
\sens(x)
&= \max_f f(x) / \overline{f}\\
&= \max_f \dist(x,c) / \overline{f}\\
&\le \max_f \frac{\dist_i + \dist(x,c^*_i)}{\max\{0, \dist_i - 2m_i\} \alpha\mu_i/2 + (1-\alpha)\opt}\\
&\le \max_{\dist_i \ge 2m_i} \frac{\dist_i + \dist(x,c^*_i)}{\left(\dist_i - 2m_i\right) \alpha\mu_i/2 + (1-\alpha)\opt}.
\end{align*}
The right hand side is maximized either at $\dist_i = 2m_i$ or $\dist_i = \infty$. We conclude that
\begin{align*}
\sens(x)
\le \max \left\{ \frac{2m_i + \dist(x,c^*_i)}{(1-\alpha)\opt}, \frac{2}{\alpha \mu_i}\right\}
\le \frac{2m_i + \dist(x,c^*_i)}{(1-\alpha)\opt} + \frac{2}{\alpha \mu_i}.
\end{align*}
The total sensitivity can now be bounded by
\begin{align*}
\Sens(F_{k,C}) 
&= \sum_i \left( \int\limits_{\V(c^*_i)} \sens d\mu\right)\\
&\le \sum_i \left( \int\limits_{\V(c^*_i)} \left(  \frac{2m_i + \dist(x,c^*_i)}{(1-\alpha)\opt} + \frac{2}{\alpha \mu_i} \right)d\mu\right)\\
&= \sum_i \left( \frac{2m_i \mu_i}{(1-\alpha)\opt} + \frac{m_i \mu_i}{(1-\alpha)\opt} + \frac{2}{\alpha}\right)\\
&= \frac{3}{1-\alpha} + \frac{2k}{\alpha}\\
&\le (\sqrt{3} + \sqrt{2k})^2  \;\;\;\; \text{(minimum of $a/(1-\alpha) + b/\alpha$ is $(\sqrt{a} + \sqrt{b})^2$)}\\
&\le 4k + 6.
\end{align*}
\end{proof}

\end{document}